\def\dOi{12(3:4)2016}
\subjclass{D.2.4 Software/Program Verification}
\tikzset{
  initial text=,
  every path/.style={->,-stealth, every loop/.style={-stealth}},
  every initial by arrow/.style={-stealth, initial text={},
  every loop/.style={red,-stealth}},
}
\newcommand{\noerror}{\text{\ding{51}}}
\newcommand{\copyerror}{\text{\ding{55}}}
\newcommand{\myquot}[1]{``#1''}
\newcommand{\nats}{\mathbb{N}}
\newcommand{\size}[1]{|#1|}
\renewcommand{\epsilon}{\varepsilon}
\newcommand{\proj}[1]{\mathrm{pr}_{\hspace{-.5pt}I}(#1)}
\newcommand{\set}[1]{\{#1\}}
\newcommand{\widx}[2]{#1(#2)}
\newcommand{\aut}{\mathcal{A}}
\newcommand{\autb}{\mathcal{B}}
\newcommand{\autc}{\mathcal{C}}
\newcommand{\tm}{\mathcal M}
\newcommand{\col}{\Omega}
\newcommand{\lstar}[1]{L_*(#1)}
\newcommand{\lexists}[1]{L_\exists(#1)}
\newcommand{\lall}[1]{L_\forall(#1)}
\newcommand{\lpar}[1]{L_p(#1)}
\newcommand{\acc}{\mathrm{acc}}
\newcommand{\delaygame}[1]{\Gamma\!_{f}(#1)}
\newcommand{\delaygamep}[1]{\Gamma\!_{f'}(#1)}
\newcommand{\SigmaI}{\Sigma_I}
\newcommand{\SigmaO}{\Sigma_O}
\newcommand{\strat}{\tau}
\newcommand{\stratO}{\tau_O}
\newcommand{\stratI}{\tau_I}
\newcommand{\p}{P}
\newcommand{\conp}{\textsc{co-NP}}
\newcommand{\pspace}{\textsc{PSpace}}
\newcommand{\apspace}{\textsc{APSpace}}
\newcommand{\exptime}{\textsc{ExpTime}}
\newcommand{\bigo}{\mathcal{O}}
\newcommand{\threeexp}{\textsc{3ExpTime}}
\newcommand{\autp}{\mathcal{P}}
\newcommand{\resolve}{r}
\newcommand{\game}{\mathcal{G}}
\newcommand{\wit}[1]{W_{#1}}
\newcommand{\init}{v_I}
\newcommand{\curlyR}{\mathfrak{R}}
\newcommand{\dom}{\mathrm{dom}}
\newcommand{\block}[1]{\overline{#1}}
\newcommand{\qacc}{q_{A}}
\newcommand{\qrej}{q_{R}}
\newcommand{\sharpsym}{\mathop{\protect\raisebox{-0.5pt}{\protect\scalebox{1.15}{\protect\UseVerb{sharp}}}}}
\newcommand{\autbsharp}{\autb_{ \protect\scalebox{0.8}{\ensuremath{\sharpsym}}}}
\newcommand{\dollar}{\$}
\newcommand{\bin}[1]{\langle #1 \rangle_{\!_2}}
\newcommand{\fresh}{N}
\newcommand{\ccopy}{C}
\newcommand{\conf}{\mathrm{Conf}}
\begin{document}

\title[How Much Lookahead is Needed to Win Infinite Games?]
      {How Much Lookahead is Needed to Win Infinite Games?\rsuper*}
\titlecomment{{\lsuper*}A preliminary version of this work appeared in the proceedings of ICALP 2015~\cite{KleinZimmermann15a}}

\author[F.~Klein]{Felix Klein\rsuper a}	
\address{{\lsuper {a,b}}Reactive Systems Group, Saarland University, 66123 Saarbrücken, Germany}	
\email{\{klein,zimmermann\}@react.uni-saarland.de}  
\thanks{{\lsuper a}Supported by the Transregional Collaborative Research Center~\myquot{AVACS} (SFB/TR 14) of the German Research Foundation (DFG) and by an IMPRS-CS PhD scholarship.}	

\author[Martin Zimmermann]{Martin Zimmermann\rsuper b}	
\address{\vspace{-18 pt}}
\thanks{{\lsuper b}Supported by the DFG project~\myquot{TriCS} (ZI 1516/1-1)}	


\keywords{Infinite Games, Delay, $\omega$-regular Languages}


\begin{abstract}
Delay games are two-player games of infinite duration in
which one player may delay her moves to obtain a lookahead on her
opponent's moves. For $\omega$-regular winning conditions it is known that
such games can be solved in doubly-exponential time and that
doubly-exponential lookahead is sufficient.

We improve upon both results by giving an exponential time algorithm
and an exponential upper bound on the necessary lookahead. This is
complemented by showing $\exptime$-hardness of the solution problem and tight exponential lower bounds on the lookahead. Both lower bounds already hold for safety conditions. Furthermore, solving delay games with reachability conditions is shown to be $\pspace$-complete.

This is a corrected version of the paper \url{https://arxiv.org/abs/1412.3701v4} published originally on August 26, 2016.
\end{abstract}

\maketitle


\section{Introduction}
\label{intro}
Many of today's problems in computer science are no longer concerned with
programs that transform data and then terminate, but with non-terminating
reactive systems which have to interact with a possibly antagonistic
environment for an unbounded amount of time. The framework of infinite
two-player games is a powerful and flexible tool to verify and synthesize such
systems. The seminal theorem of Büchi and Landweber~\cite{BuechiLandweber69}
states that the winner of an infinite game on a finite arena with an $\omega$-regular
winning condition can be determined and a corresponding finite-state winning
strategy can be constructed effectively.

\subsection{Delay Games.} In this work, we consider an extension of the classical framework: in a delay game, one player can postpone her moves for some time to obtain a lookahead on her opponent's moves. This allows her to
win some games which she would loose without lookahead, e.g., if her first move
depends on the third move of her opponent. Nevertheless, there are
winning conditions that cannot be won with any finite lookahead, e.g., if her
first move depends on every move of her opponent. Delay
arises naturally if transmission of data in networks or components equipped
with buffers are modeled.

From a more theoretical point of view,
uniformization of relations by continuous functions~\cite{DBLP:conf/rex/ThomasL93,trakhtenbrot1973finite} can be expressed and
analyzed using delay games~\cite{HoltmannKaiserThomas12,Thomas09}. We consider games in which two players pick letters from alphabets~$\SigmaI$
and $\SigmaO$, respectively, thereby producing two infinite sequences~$\alpha$
and $\beta$. Thus, a strategy for the second player induces a mapping
$\tau\colon \SigmaI^\omega\rightarrow\SigmaO^\omega$. It is winning for
her if $(\alpha,\tau(\alpha))$ is contained in the winning
condition~$L\subseteq \SigmaI^\omega\times \SigmaO^\omega$ for every $\alpha$.
If this is the case, we say that $\tau$ uniformizes $L$. In the classical setting, in
which the players pick letters in alternation, the $n$-th letter of
$\tau(\alpha)$ depends only on the first $n$ letters of $\alpha$. A strategy
with bounded lookahead, i.e., only finitely many
moves are postponed, induces a Lipschitz-continuous function~$\tau$ (in the
Cantor topology on $\Sigma^\omega$) and a strategy with unbounded lookahead
induces a continuous function  (equivalently, a uniformly continuous
function, as $\Sigma^\omega$ is compact (see, e.g.,~\cite{HoltmannKaiserThomas12})).

\subsection{Related Work.} Hosch and Landweber proved that it is decidable whether a delay game with an $\omega$-regular
winning condition can be won with bounded lookahead~\cite{HoschLandweber72}. Later, Holtmann, Kaiser, and Thomas revisited the problem and showed that if the delaying
player wins such a game with unbounded lookahead, then she already wins it with
doubly-exponential bounded lookahead, and gave a streamlined decidability
proof yielding an algorithm with doubly-exponential running time~\cite{HoltmannKaiserThomas12}. Thus, the delaying player does not gain additional power from having unbounded lookahead, bounded lookahead is sufficient.

Going beyond $\omega$-regularity by considering context-free winning conditions leads to undecidability and non-elementary lower bounds on the necessary lookahead, even for very weak fragments~\cite{FridmanLoedingZimmermann11}. Nevertheless, there is another extension of the $\omega$-regular winning conditions where one can prove the analogue of the Hosch-Landweber Theorem: it is decidable whether the delaying player wins a delay game with bounded lookahead, if the winning condition is definable in weak monadic second order logic with the unbounding quantifier (WMSO$+$U)~\cite{Zimmermann15}. Furthermore, doubly-exponential lookahead is sufficient for such winning conditions, provided the delaying player wins with bounded lookahead at all. However, bounded lookahead is not always sufficient to win such games, i.e., the analogue of the Holtmann-Kaiser-Thomas Theorem does not hold for WMSO$+$U winning conditions. In contrast, if the delaying player wins such a game, then she wins no matter how slowly the lookahead grows~\cite{Zimmermann15d}. 

Due to these negative results for WMSO$+$U winning conditions, delay games with winning conditions in Prompt-LTL~\cite{KupfermanPitermanVardi09} have been considered: this logic extends LTL~\cite{Pnueli77} with temporal operators whose scope is bounded in time. Determining the winner of such a game is $\threeexp$-complete and triply-exponential bounded lookahead is necessary in general and always sufficient~\cite{KleinZimmermann16}. Furthermore, all lower bounds already hold for the special case of LTL. 
 
Finally, all delay games with Borel winning conditions are determined,
since they were shown to be reducible to delay-free games with Borel
winning conditions~\cite{KleinZimmermann15}.

Stated in terms of uniformization, Hosch and Landweber proved decidability of the
uniformization problem for $\omega$-regular relations by Lipschitz-continuous functions
and Holtmann et al.\ proved the equivalence of the existence of a
continuous uniformization function and the existence of a Lipschitz-continuous
uniformization function for $\omega$-regular relations. Furthermore, uniformization of context-free relations is undecidable, even with respect to Lipschitz-continuous functions, but uniformization of WMSO$+$U relations by Lipschitz-continuous functions is decidable.

In another line of work, Carayol and Löding considered the case of finite words~\cite{CarayolLoeding12}, and Löding and Winter~\cite{LoedingWinter14} considered the case of finite trees, which are both decidable. However, the non-existence of MSO-definable choice functions on the infinite
binary tree~\cite{CarayolLoeding07,GS83} implies that uniformization fails for such trees.

Although several extensions of $\omega$-regular winning conditions for delay games have been studied, several gaps remain open, even for $\omega$-regular winning conditions. Holtmann et al.\ proved that such games can be solved in doubly-exponential time, but only trivial lower bounds are known. Similarly, their algorithm yields a doubly-exponential upper bound on the necessary lookahead, but only straightforward linear lower bounds are known.  Also, only deterministic parity automata were used to specify winning conditions, and the necessary lookahead and the solution complexity are measured in their size. It is open whether weaker automata models like reachability or safety automata have smaller lookahead requirements and allow for faster algorithms.

\subsection{Our Contribution.} We close these gaps and improve upon both results of Holtmann et al.\ by determining the exact complexity of $\omega$-regular delay games and by giving tight bounds on the necessary lookahead, both for winning conditions specified by deterministic parity automata and for weaker models. 

First, we  present an exponential time algorithm for solving delay games with $\omega$-regular winning conditions, an exponential improvement over the original doubly-exponential time algorithm. Both algorithms share some similarities: given a deterministic parity automaton~$\aut$ recognizing the winning condition of the game, a parity game is constructed that is won by the delaying player if and only if she wins the delay game with winning condition~$L(\aut)$. Furthermore, both parity games are induced by equivalence relations that capture the behavior of $\aut$. However, our parity game is of exponential size while the one of Holtmann et al.\ is doubly-exponential. Also, they need an intermediate game, the so-called block game, to prove the equivalence of the delay game and the parity game, while our equivalence proof is direct. Thus, our algorithm and its correctness proof are even simpler than the ones of Holtmann et al.

Second, we show that solving delay games is $\exptime$-complete by proving the first non-trivial lower bound on the complexity of $\omega$-regular delay games. The lower bound is proved by a reduction from the acceptance problem for alternating polynomial space Turing machines~\cite{ChandraKS81}, which results in delay games with safety conditions. Thus, solving delay games with safety conditions is already $\exptime$-hard. Our reduction is inspired by the $\exptime$-hardness proof for continuous simulation games~\cite{DBLP:journals/corr/HutagalungLL14}, a simulation game on Büchi automata where Duplicator is able to postpone her moves to obtain a lookahead on Spoiler's moves. However, this reduction is from a two-player tiling problem while we directly reduce from alternating Turing machines.

Third, we determine the exact amount of lookahead necessary to win delay games with $\omega$-regular winning conditions. From our algorithm we derive an exponential upper bound, which is again an exponential improvement. This upper bound is complemented by the first non-trivial lower bound on the necessary lookahead: there are reachability and safety conditions that are winning for the delaying player, but only with exponential lookahead, i.e., our upper bound is tight. 

Fourth, we present the first results for fragments of $\omega$-regular winning  conditions. As already mentioned above, our lower bounds on complexity and necessary lookahead already hold for safety conditions, i.e., safety is already as hard as parity. Thus, the complexity of the problems manifests itself in the transition structure of the automaton, not in the acceptance condition. For reachability conditions, the situation is different: we show that solving delay games with reachability conditions is equivalent to universality of non-deterministic reachability automata and therefore $\pspace$-complete.
Thus, there is a gap between the complexity of solving delay games with reachability conditions and delay games with safety conditions (unless $\pspace = \exptime$). Furthermore, as both reachability and safety already require exponential lookahead and are of high computational complexity it is natural to search for more tractable fragments. One such fragment is the class of winning conditions that are both reachability and safety, the so-called clopen languages. Here, we again prove tight bounds on the complexity and on the necessary lookahead for delay games with $\omega$-regular clopen winning conditions: linear lookahead is necessary and sufficient and determining the winner is $\conp$-complete.

Our results are summarized in Figure~\ref{fig_results}, where an asterisk~$*$ denotes cases where the lower bounds hold for deterministic automata while the upper bounds hold for non-deterministic automata. All bounds are tight.

\begin{figure}[h]

\begin{center}
\begin{tabular}{ c c c  }

acceptance & lookahead & complexity \\

\toprule

clopen$^*$ 					& linear 			& $\conp$-complete			\\
reachability$^*$ 			& exponential		& $\pspace$-complete		\\
det.\ safety 				& exponential 		& $\exptime$-complete		\\
det.\ parity					& exponential		& $\exptime$-complete		\\ 
\bottomrule

\end{tabular}
\end{center}

\caption{Table of results.}
\label{fig_results}	
\end{figure}


\section{Preliminaries}
\label{sec_defs}
The set of non-negative integers is denoted by $ \nats $. An
alphabet~$ \Sigma $ is a non-empty finite set of letters, $ \Sigma^{*}
$ the set of finite words over $ \Sigma $, $ \Sigma^{n} $ the set of
words of length~$ n $, and $ \Sigma^{\omega} $ the set of infinite words. The
empty word is denoted by $ \epsilon $ and the length of a finite word~$ w $
by~$ \size{w} $. For $ w \in \Sigma^{*} \cup \Sigma^{\omega} $ we
write $ \widx{w}{n} $ for the $ n $-th letter of $ w $.	

\subsection{Automata}
\label{subsec_automata}
An automaton~$ \aut = (Q, \Sigma, q_{ I }, \Delta, \varphi) $ consists
of a finite set~$ Q $ of states with initial state~$ q_{ I } \in Q $, an alphabet~$ \Sigma $, a non-deterministic transition function $ \Delta \colon Q \times \Sigma
\rightarrow 2^{Q} \setminus \set{ \emptyset } $, and an acceptance condition~$ \varphi $, which is either a set~$ F
\subseteq Q $ of accepting states, depicted by doubly-lined states, or
a coloring~$ \col \colon Q \rightarrow\nats $. Note that we require automata
to be complete, i.e., every state has at least one outgoing transition
labeled by $ a $ for every $ a \in \Sigma $. We discuss this
restriction after introducing the different acceptance conditions. The size of $
\aut $, denoted by $ \size{\aut} $, is the cardinality of $ Q $.  A
run of $ \aut $ on an infinite word~$ \widx{\alpha}{0} \widx{\alpha}{1} \widx{\alpha}{2} \cdots $ over $ \Sigma
$ is a sequence~$q_0 q_1 q_2 \cdots $ such that $q_0 =  q_{ I } $ and $q_{n+1} \in \Delta(q_n, \widx{\alpha}{n})$ for every $n$. Runs on finite words are defined analogously.

An automaton is deterministic, if $ |\Delta(q,a)| = 1 $ for every $ q
$ and $ a $. In this case, we denote $ \Delta $ by a function~$ \delta
\colon Q \times \Sigma \rightarrow Q $. A deterministic automaton has a unique
run on every finite or infinite word. A state~$ q $ of $ \aut $ is a sink, if $ \Delta(q,a) = \set{q}$ for every $ a
\in \Sigma $. 

Given an automaton~$ \aut $ over $ \Sigma $ with some set~$F$ of accepting states or with some coloring~$\col$, we consider the following
acceptance conditions and their induced languages:

\begin{itemize}

\item $ \lstar{\aut} \subseteq \Sigma^{*} $ denotes the set of finite
  words over $ \Sigma $ accepted by $ \aut $, i.e., the set of words
  that have a run ending in $ F $.

\item $ \lexists{\aut} \subseteq \Sigma^{\omega} $ denotes the set of
  infinite words over $ \Sigma $ that have a run visiting an accepting
  state at least once, called reachability acceptance. Due to completeness, we  have $ \lexists{\aut} = \lstar{\aut}
  \cdot \Sigma^{\omega} $.

\item Dually, $ \lall{\aut} \subseteq \Sigma^{\omega}$ denotes the set
  of infinite words over $ \Sigma $ that have a run only visiting
  accepting states, called safety acceptance.

\item $ \lpar{\aut} \subseteq \Sigma^{\omega} $ denotes the set of
  infinite words that have a run such that the maximal color visited
  infinitely often during this run is even. This is the classical (max)-parity
  acceptance condition.

\end{itemize}

\noindent A reachability (winning) condition is a language that is accepted by
an automaton with reachability acceptance, called a reachability
automaton. Safety (winning) conditions and parity (winning) conditions and safety and parity automata are defined
similarly. Every (deterministic) reachability automaton can be turned into a (deterministic) parity automaton of
the same size that recognizes the same language by turning the accepting
 states into sinks and by then defining
an appropriate coloring. Similarly, one can turn a (deterministic) safety automaton into a (deterministic) parity automaton of
the same size that recognizes the same language by turning the rejecting
 states into sinks. Deterministic parity
automata recognize exactly the $ \omega $-regular languages.

Recall that we require automata to be complete. For safety and parity
acceptance this is no restriction, since we can always add a fresh
rejecting sink, i.e., one that is not in $ F $ in the case of
safety and one with odd color in the case of parity, and lead all
missing transitions to this sink. However, incomplete automata with
reachability acceptance are strictly stronger than complete ones, as
incompleteness can be used to check safety properties. We impose this
restriction since we are interested in pure reachability conditions.

Given a language~$ L \subseteq \left( \SigmaI \times \SigmaO
\right)^{\omega} $ we denote by $ \proj{L} $ its projection to the
first component, i.e.,
$
\proj{L} = \set{ \widx{\alpha}{0} \widx{\alpha}{1} \widx{\alpha}{2}\cdots \mid
{\widx{\alpha}{0} \choose \widx{\beta}{0}}
{\widx{\alpha}{1} \choose \widx{\beta}{1}}
{\widx{\alpha}{2} \choose \widx{\beta}{2}} \cdots \in L}
$.
 Similarly, given an automaton~$ \aut $ over $ \SigmaI
\times \SigmaO$, we denote by $ \proj{\aut} $ the automaton obtained
by projecting each letter to its first component, i.e., we replace the alphabet~$\SigmaI \times \SigmaO$ by $\SigmaI$ and $\Delta$ by $\Delta'$, where $\Delta'(q,a) = \bigcup_{b \in \SigmaO}\Delta(q, {a \choose b})$ for all $ q \in Q $ and all $ a \in \SigmaI $.

\begin{rem}
  Let $ \acc \in \set{*, \exists, \forall, p}$, then $
  \proj{L_{\acc}(\aut)} = L_{\acc}(\proj{\aut}) $.
\end{rem}

\subsection{Games with Delay}
\label{subsec_delaygames}
A delay function is a mapping $ f \colon \nats \rightarrow\nats \setminus
\set{ 0 } $, which is said to be constant, if $ f(i) = 1 $ for
every $ i > 0 $. Given an $ \omega $-language $ L \subseteq
  \left( \SigmaI \times \SigmaO \right)^{\omega} $ and a delay
function~$ f $, the game $ \delaygame{L} $ is played by two players,
the input player \myquot{Player~$ I $} and the output player \myquot{Player~$ O
$} in rounds $ i = 0,1,2,\ldots $ as follows: in round~$ i $, Player~$ I $
picks a word $ u_{i} \in \SigmaI^{f(i)} $, then Player~$ O $ picks one
letter $ v_{i} \in \SigmaO $. We refer to the sequence $
(u_{0},v_{0}), (u_{1},v_{1}), (u_{2},v_{2}), \ldots$ as a play of $
\delaygame{L} $, which yields two infinite words $ \alpha =
u_{0}u_{1}u_{2} \cdots$ and $ \beta = v_{0}v_{1}v_{2} \cdots
$. Player~$ O $ wins the play if and only if the outcome~$ {
  \widx{\alpha}{0} \choose \widx{\beta}{0} }{ \widx{\alpha}{1} \choose
  \widx{\beta}{1} }{ \widx{\alpha}{2} \choose \widx{\beta}{2} } \cdots
$ is in $ L $, otherwise Player~$ I $ wins.

Given a delay function $ f $, a strategy for Player~$ I $ is a mapping
$ \stratI \colon \SigmaO^{*} \rightarrow\SigmaI^{*} $ where $
\size{\stratI(w)} = f(\size{w}) $, and a strategy for Player~$ O $ is
a mapping $ \stratO \colon \SigmaI^{*} \rightarrow\SigmaO $. Consider a
play $ (u_{0},v_{0}), (u_{1},v_{1}), (u_{2},v_{2}), \ldots $ of $
\delaygame{L} $. Such a play is consistent with $ \stratI $, if $
u_{i} = \stratI(v_{0} \cdots v_{i-1}) $ for every $ i \in \nats $. It
is consistent with $ \stratO $, if $ v_{i} = \stratO(u_{0} \cdots
u_{i}) $ for every $ i \in \nats $. A strategy~$ \strat $ for Player~$\p\in \set{I,O}$ is winning, if every play that is consistent with $\strat$ is winning for Player~$\p$. We say that a player wins $ \delaygame{L} $, if she has a winning strategy.

We continue with two examples of delay games with $ \omega
$-regular winning conditions.

\begin{exas} \label{example_introdelaygame}
  Note that both conditions can be accepted by safety automata.

  \begin{enumerate}

  \item \label{example_introdelaygame_predictnona}
  Consider $ L_{1} $ over $ \set{ a, b, c } \times \set{ b, c }
      $ with $ { \widx{\alpha}{0} \choose \widx{\beta}{0} }{
      \widx{\alpha}{1} \choose \widx{\beta}{1} }{ \widx{\alpha}{2}
      \choose \widx{\beta}{2} } \cdots \in L_{1} $, if $
    \widx{\alpha}{n} = a $ for every $ n \in \nats $ or if $
    \widx{\beta}{0} = \widx{\alpha}{n} $, where $ n $ is the smallest
    position with $ \widx{\alpha}{n} \neq a $. Intuitively, Player~$ O
    $ wins, if the letter she picks in the first round is equal to the
    first letter other than $ a $ that Player~$ I $
    picks. Also, Player~$ O $ wins, if there is no such
    letter.

    We claim that Player~$ I $ wins $ \delaygame{L_{1}} $ for every delay function~$
    f $: Player~$ I $ picks~$ a^{f(0)} $ in the first round and assume
    Player~$ O $ picks $ b $ afterwards (the case where she picks $ c
    $ is dual). Then, Player~$ I $ picks a word starting with $ c $ in
    the second round. The resulting play is winning for Player~$ I $
    no matter how it is continued. Thus, Player~$ I $ has a winning
    strategy in $ \delaygame{L_{1}} $.
 
  \item\label{example_introdelaygame_shift} Now, consider
    $ L_{2} $ over $ \set{ a, b, c} \times \set{ a, b, c} $ where $ { \widx{\alpha}{0} \choose \widx{\beta}{0} }{
        \widx{\alpha}{1} \choose \widx{\beta}{1} }{ \widx{\alpha}{2}
        \choose \widx{\beta}{2} } \cdots \in L_{2} $, if $
    \widx{\beta}{n} = \widx{\alpha}{n+2} $ for every $ n \in \nats $,
    i.e., Player~$ O $ wins if the input is shifted two positions to
    the left.

    Player~$ O $ has a winning strategy for $ \delaygame{L_{2}} $ for every $f$ with
    $ f(0) \geq 3 $. In this case, Player~$ O $ has at least three
    letters lookahead in each round, which suffices to shift the input
    of Player~$ I $ two positions to the left. On the other hand, if $
    f(0) < 3 $, then Player~$ I $ has a winning strategy, since
    Player~$ O $ has to pick $ \widx{\beta}{0} $ before $
    \widx{\alpha}{0+2} $ has been picked by Player~$ I $.

  \end{enumerate}
\end{exas}

\noindent Note that if a language~$ L $ is
recognizable by a (deterministic) parity automaton, then $ \delaygame{L} $ is
determined, i.e., exactly one of the players has a winning strategy, as a  delay
game with parity condition can be expressed as an explicit parity
game in a countable arena, which is determined~\cite{EmersonJutla91,Mostowski91}. This result has been recently generalized to the class of Borel winning conditions~\cite{KleinZimmermann15}.

Also, note that universality of $ \proj{L} $ is a necessary
condition for Player~$ O $ to win $ \delaygame{L} $. Otherwise,
Player~$ I $ could pick a word from $ \SigmaI^{\omega} \setminus
\proj{L} $, which is winning for him, no matter how Player~$ O $
responds.

\begin{prop} \label{prop_necccond}
  If Player~$ O $ wins $ \delaygame{L} $, then $ \proj{L} $ is
  universal.
\end{prop}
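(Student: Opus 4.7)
The plan is to argue by contraposition: assuming $\proj{L}$ is not universal, I will construct an explicit winning strategy for Player~$I$ in $\delaygame{L}$, which contradicts the hypothesis that Player~$O$ wins (recall from the paragraph preceding the proposition that $\delaygame{L}$ is determined for parity conditions, but here we do not need determinacy: exhibiting a winning strategy for Player~$I$ already rules out the existence of a winning strategy for Player~$O$, since a play consistent with both would yield a contradiction).

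So suppose $\proj{L}$ is not universal and fix some $\alpha \in \SigmaI^\omega \setminus \proj{L}$. Given the delay function~$f$, partition $\alpha$ into consecutive blocks $u_0 u_1 u_2 \cdots$ with $u_i \in \SigmaI^{f(i)}$, and define the strategy $\stratI\colon \SigmaO^* \to \SigmaI^*$ by $\stratI(w) = u_{\size{w}}$ for every $w \in \SigmaO^*$. This is a well-defined strategy for Player~$I$ since $\size{\stratI(w)} = f(\size{w})$ as required.

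Next I verify that $\stratI$ is winning. Any play consistent with $\stratI$ produces the input word $u_0 u_1 u_2 \cdots = \alpha$ on the $\SigmaI$-component, regardless of Player~$O$'s choices. Letting $\beta \in \SigmaO^\omega$ denote Player~$O$'s resulting output, the outcome ${\alpha(0) \choose \beta(0)}{\alpha(1) \choose \beta(1)}{\alpha(2) \choose \beta(2)} \cdots$ cannot lie in~$L$: if it did, then by definition of projection $\alpha$ would belong to $\proj{L}$, contradicting the choice of~$\alpha$. Hence every play consistent with $\stratI$ is won by Player~$I$, so Player~$O$ does not win $\delaygame{L}$.

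There is no real obstacle here; the argument is entirely elementary and just unwinds the definitions of the projection, of a strategy for Player~$I$ under delay function $f$, and of the winning condition of $\delaygame{L}$. The only minor point to be careful about is the bookkeeping of block lengths, which is handled by the formula $\stratI(w) = u_{\size{w}}$ together with the fact that the blocks $u_i$ have the prescribed length $f(i)$.
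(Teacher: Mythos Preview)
Your argument is correct and matches the paper's own reasoning: the paper gives only the one-sentence justification preceding the proposition (Player~$I$ picks a word from $\SigmaI^\omega \setminus \proj{L}$, which is winning for him regardless of Player~$O$'s responses), and your proof is a careful formalization of exactly that idea.
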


We are interested in solving delay games: given an automaton~$ \aut $
recognizing a
language~$ L \subseteq (\SigmaI \times \SigmaO)^{\omega} $, determine
whether Player~$ O $ wins $ \delaygame{L} $ for some delay
function~$ f $. Furthermore, we are interested in upper and lower
bounds on the lookahead induced by such an~$ f $. We measure the
complexity and the bounds in the size of the automation~$ \aut $.

%

\section{Lower Bounds on the Lookahead}
\label{sec_lowerbounds}
In this section, we prove lower bounds on the necessary lookahead for
Player~$ O $ to win delay games. We first give an exponential lower
bound for reachability conditions, then we extend this idea to provide
an exponential lower bound for safety conditions. Consequently, the same bounds
hold for more expressive acceptance conditions like Büchi, co-Büchi, and
parity. They are complemented by an exponential upper bound for parity
conditions in the next section. Note that both lower bounds already hold for deterministic automata.

\begin{thm} \label{thm_lowerboundsreach}
  For every $ n > 1 $ there is a language~$ L_{n} $ such that

  \begin{itemize}

  \item $ L_{n} = \lexists{\aut_{n}} $ for some deterministic
    automaton~$\aut_n$ with $ \size{\aut_{n}} \in \bigo(n) $,

  \item Player~$ O $ wins $ \delaygame{L_{n}} $ for some constant
    delay function~$ f $, but

  \item Player~$ I $ wins $ \delaygame{L_{n}} $ for every delay
    function~$ f $ with $ f(0) \leq 2^{n} $.

  \end{itemize}
\end{thm}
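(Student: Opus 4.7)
The plan is to construct $L_n$ as a reachability condition that embodies a ``guess the hidden challenge'' game. I take $\SigmaI$ and $\SigmaO$ to contain a shared data alphabet together with a distinguished marker $\#$, and define $L_n$ to accept the pair $(\alpha,\beta)$ iff either the first $\#$ in $\alpha$ lies outside the controlled region comprising the first $2^n+1$ positions (giving Player~$O$ a vacuous win), or the first $\#$ in $\alpha$ occurs at some position $k \leq 2^n$ and Player~$O$'s first letter satisfies $\beta_0 = \alpha_{k+1}$. The deterministic reachability automaton $\aut_n$ is engineered so that its accepting sink is entered precisely when one of these alternatives is confirmed locally.

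For the upper bound, take $f$ constant with $f(0) = 2^n + 2$. Player~$O$'s lookahead at round~$0$ then covers the entire controlled region along with the challenge letter, so she reads off $\alpha_{k+1}$ (or observes that $\#$ is absent from the window) and plays accordingly, winning for every $\alpha$. For the lower bound, fix any $f$ with $f(0) \leq 2^n$ and let Player~$I$ pursue the following adversary strategy: in round~$0$ he plays $f(0)$ non-$\#$ letters; after observing Player~$O$'s commitment $\beta_0$, he completes the preamble to length $2^n$ in the next few rounds, places $\#$ at position $2^n$, and plays $a^{*} \neq \beta_0$ at position $2^n+1$. Both $\#$ and $a^{*}$ are thus committed only after $\beta_0$, so the prediction disjunct fails while the vacuous disjunct does not apply; hence Player~$I$ wins $\Gamma_f(L_n)$.

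The main obstacle is realising $L_n$ with a deterministic reachability automaton of only $O(n)$ states, because naively checking ``first $\#$ at position $\leq 2^n$'' seems to demand an exponential counter. I would address this by pushing the positional bookkeeping into the alphabet: enlarge $\SigmaI$ so that each preamble letter carries, alongside its data bit, a position tag split into single-bit pieces and interleaved with its neighbours in a format where consecutive tags can be compared by a streaming successor check using a constant-bit carry and an $O(n)$-valued bit-position counter. The automaton then uses $O(n)$ states to walk through the format and verify the successor relation, and any deviation by Player~$I$ is punished by immediate entry into the accepting sink (so he is game-theoretically forced to play honestly). Since $|\aut_n|$ measures only the number of states and the paper places no bound on the alphabet or transition count, this encoding is permitted, and the remaining work of the automaton---recording $\beta_0$, consuming $\#$, comparing with $\alpha_{k+1}$, and transitioning to the accepting sink---is constant-size bookkeeping on top of the streaming tag verifier.
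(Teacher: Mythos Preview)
Your high-level game design is reasonable, but the decisive step---building a deterministic automaton with only $O(n)$ states that catches every deviation from the position-tag format---does not go through. The ``streaming successor check with a constant-bit carry and an $O(n)$-valued bit-position counter'' is missing a crucial ingredient: to compare bit~$j$ of tag~$i{+}1$ with bit~$j$ of tag~$i$ you must still have bit~$j$ of tag~$i$ available, and that bit was read $n$ letters ago. Maintaining all $n$ such bits costs $2^n$ states. More abstractly, whatever encoding you choose, the honest prefixes ending at positions $0,1,\ldots,2^n{-}1$ are pairwise Myhill--Nerode inequivalent (each admits a different unique legal next tag), so any deterministic automaton separating honest from dishonest prefixes needs at least $2^n$ states. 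Exploiting an unbounded alphabet by packing both the current and the previous tag into each letter does not help either: the intra-letter successor check is then free, but you must still verify that the ``previous'' field of one letter equals the ``current'' field of its predecessor, which again forces the automaton to store an $n$-bit value. And if you abandon full verification, Player~$I$ can forge tags so that the first $\#$ sits arbitrarily far out while its tag claims it lies inside the controlled region; then the vacuous-win branch never fires, no finite lookahead reaches the challenge letter, and your upper-bound direction collapses.

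The paper sidesteps position counting entirely. It takes $\Sigma_I=\Sigma_O=\{1,\ldots,n\}$ and exploits the combinatorial fact that any word over this alphabet avoiding a \emph{bad $j$-pair} for every~$j$ (two occurrences of~$j$ with no larger letter in between) has length strictly below~$2^n$. Detecting a bad $j$-pair for a single fixed~$j$ needs only a three-state deterministic automaton~$\mathcal{B}_j$; Player~$O$'s first move selects which~$j$ to test, so the overall automaton is a fan-out into $n$ such gadgets and has $O(n)$ states. The exponential bound thus comes from the combinatorics of the $n$-letter alphabet, not from an explicit counter that the automaton would have to maintain---which is exactly the obstacle your construction runs into.
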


\begin{proof}
  Let $ \SigmaI = \SigmaO = \set{1, \ldots, n} $. We say that $ w $
  in $ \SigmaI^{*} $ contains a bad $ j $-pair, for $j \in \SigmaI $, if there are two occurrences of $ j $ in $ w $ such that no $ j'  > j $ occurs in between. The
  automaton~$ \autb_{j} $, depicted in
  Figure~\ref{fig_lowerboundreach}(a), accepts exactly the words with a bad $j$-pair. Now, consider the language $ L
  $ over $ \SigmaI $ defined by
  \begin{equation*}
    L = \bigcap\limits_{1 \leq j \leq n} \set{w \in \SigmaI^{*} \mid w
      \text{ contains no bad } j \text{-pair} } .
  \end{equation*}

\noindent First, we show that every $ w \in L $ satisfies $ \size{w} < 2^{n} $. To this end, we prove the stronger statement $ \size{w} < 2^{m} $, where $ m $ is the maximal letter occurring in $w$, by 
    induction over $m$. The induction base $ m = 1 $ is trivial, so let $ m > 1 $. There cannot be two occurrences of $m$ in $w$, as they would constitute a bad $m$-pair. Accordingly, there is exactly
    one $ m $ in $ w $, i.e., we can decompose $w$ into $ w =
    w_{\triangleleft} \, m \, w_{\triangleright} $ such that $w_{\triangleleft}$ and $ w_{\triangleright}$ contain no occurrence of $m$. Thus, the induction hypothesis is applicable and shows $ \size{w_{\triangleleft}}, \size{w_{\triangleright}} <  2^{m-1} $, which implies $ \size{w}
    < 2^{m} $.

   Dually, there is a word $ w_{n} \in L $ with $ \size{w_{n}} = 2^{n} -
    1 $ which is defined inductively via $ w_{1} = 1 $ and $ w_{m} =
    w_{m-1} \, m \; w_{m-1} $ for $ m > 1 $. A simple induction
    shows $ w_{n} \in L $ and $ \size{w_{n}} = 2^{n} -
    1 $.

  \begin{figure}[ht]
    \centering

    \begin{tikzpicture}
      \clip (-2.8,-1.5) rectangle (8.85,1.6);

      \node at (0,0.39) {
        \begin{tikzpicture}
          [thick, scale =.75, node distance = 2cm]
        
          \node[state, initial by arrow] (A) at (0,0) {};
          \node[state] (B) at (3,0) {};
          \node[state, accepting] (C) at (6,0) {};

          \path 
          
          (A) 
          edge[loop above] node[above] {$ \SigmaI \setminus \set{j}  $} (A)
          edge[bend left=20] node[above] {$ j $} (B)

          (B) 
          edge[loop above] node[above] {$ <\!j $} (B)
          edge[bend left=20] node[below] {$ >\!j $} (A)
          edge node[above] {$ j $} (C)

          (C) 
          edge[loop above] node[above] {$ \SigmaI $} (C)

          ; 
        \end{tikzpicture}
      };

      \node at (6.3,0) {
        \begin{tikzpicture}
          [thick, scale =0.95, node distance = 2cm]
	
          \node[state, initial by arrow] (q0p) at (0.5,0) {};
          \node[state, initial by arrow, initial distance=.25cm, 
          initial where = below, scale =.3] (q0) at (2, 1) {};
          \node[state, initial by arrow, initial distance=.25cm, 
          initial where = below, scale =.3] (qn) at (2, -1) {};

          \draw[rounded corners] (1.5, 0.5) rectangle (4.5, 1.5);
          \draw[rounded corners] (1.5, -.5) rectangle (4.5, -1.5);

          \node[anchor=east] at (4.5, 0.8) {$ \autb_{1}[a \backslash 
            {a \choose *}] $};
          \node[anchor=east] at (4.5, -1.2) {$ \autb_{n}[a \backslash 
            {a \choose *}] $};
          \node at (3.5, .1) {$\vdots$};

          \path 

          (q0p) 
          edge[bend left] 
          node[above, near start] {$ { * \choose 1 } $} (q0)
          edge[bend right]
          node[below, near start] {$ { * \choose n } $} (qn)

          ; 
        \end{tikzpicture}
      };

      \node at (-2.5,-1.2) {(a)};
      \node at (4.3,-1.2) {(b)};
    \end{tikzpicture}

    \caption{(a) Automaton~$ \autb_{j} $ for $ j \in \SigmaI
      $. (b) Construction of $ \aut_{n} $.}
    \label{fig_lowerboundreach}
  \end{figure}

\fussy
  \noindent The winning condition~$L_n$ is defined as follows: $ {
    \widx{\alpha}{0} \choose \widx{\beta}{0} }{ \widx{\alpha}{1}
    \choose \widx{\beta}{1} } { \widx{\alpha}{2} \choose
    \widx{\beta}{2} } \cdots $ is in $ L_{n} $ if $ \widx{\alpha}{1}
  \widx{\alpha}{2} \widx{\alpha}{3}\cdots $ contains a bad $ \widx{\beta}{0}
  $-pair, i.e., with her first move, Player~$O$ has to pick a $j$ such that Player~$I$ has produced a bad $j$-pair. For technical reasons, the first letter picked by Player~$I$ is ignored. The construction of an automaton~$ \aut_{n} $ recognizing $L_n$ is
  sketched in Figure~\ref{fig_lowerboundreach}(b), where $*$ denotes an arbitrary letter and $ \autb_{j}[a
  \backslash {a \choose *}] $ denotes $ \autb_{j} $ where for each $ a
  \in \SigmaI $ every transition labeled by $ a $ is replaced by
  transitions labeled by $ {a \choose b} $ for every $ b \in \SigmaO
  $. Clearly, we have $ \aut_{n} \in \bigo(n) $.
\sloppy
  
  Player~$ O $ wins $ \delaygame{L_{n}} $ for every delay function with $
  f(0) > 2^{n} $. In the first round, Player~$ I $ has to pick a word $ u_0$ such that $u_0$ without its first letter is not in $L$, as it is too long for being in $L$.
 This allows Player~$O$ to find a bad $j$-pair for some $j$, i.e., she wins the play no matter how it is continued.
 
 However, for $f$ with $ f(0) \leq 2^{n}, $ Player~$ I $
  has a winning strategy by picking the prefix of $ 1w_{n} $ of length $f(0)$ in the first round. Player~$ O
  $ has to answer with some $ j \in \SigmaO $. In this situation, Player~$ I $ continues by playing some $ j' \neq j $ ad infinitum, which ensures that the resulting sequence does not contain a bad $j$-pair. Thus, the play is winning for Player~$I$.
\end{proof}

 For safety conditions, we use the same idea as in the
reachability case, but we need to introduce a new letter $
\sharpsym $ to give Player~$ I $ the possibility to reach a non-accepting
state.

\begin{thm} \label{thm_lowerboundssafe}
  For every $ n > 1 $ there is a language~$ L_{n}' $ such that

  \begin{itemize}

  \item $ L_{n}' = \lall{\aut_{n}'} $ for some deterministic automaton~$\aut_n'$
    with $ \size{\aut_{n}'} \in \bigo(n) $,

  \item Player~$ O $ wins $ \delaygame{L_{n}'} $ for some constant
    delay function~$ f $, but

  \item Player~$ I $ wins $ \delaygame{L_{n}'} $ for every delay
    function~$ f $ with $ f(0) \leq 2^{n} $.

  \end{itemize}
\end{thm}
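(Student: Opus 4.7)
The plan is to lift the construction of Theorem~\ref{thm_lowerboundsreach} to the safety setting by introducing a fresh letter~$\sharp$ into Player~$I$'s alphabet. I would take the input alphabet to be $\SigmaI \cup \{\sharp\}$ (with $\SigmaI = \{1,\dots,n\}$ as before) and keep $\SigmaO = \{1,\dots,n\}$. The safety automaton~$\aut_n'$ reuses the branch structure of $\aut_n$: one initial state that branches on $\beta_0 = j$ into $n$ copies of the modified $\autb_j$. Two changes adapt it to safety acceptance: (i) the former accepting state~$C$ of each $\autb_j$ becomes an accepting absorbing sink $C_j$, representing the fact that once Player~$O$'s bad pair has been witnessed, safety is preserved forever; (ii) a single rejecting sink~$q_R$ is added, entered by any $\sharp$-transition from a non-sink state, and forming the unique non-accepting state. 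The resulting automaton has $\bigo(n)$ states and recognises $L_n' = \lall{\aut_n'}$.

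For the lower bound on Player~$I$, I would reuse the witness word $w_n$ from Theorem~\ref{thm_lowerboundsreach}. For any $f$ with $f(0) \le 2^n$, Player~$I$ plays the length-$f(0)$ prefix of $1\,w_n$ in round~$0$. Because $w_n$ and each of its prefixes contain no bad $j$-pair for any~$j$, after Player~$O$ commits to any $\beta_0 = j$ the automaton remains in the non-sink states of the $j$-th branch while processing $\alpha_1\cdots\alpha_{f(0)-1}$. In round~$1$, Player~$I$ plays~$\sharp$, triggering the transition to~$q_R$. This yields a winning strategy against every such~$f$.

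For the matching upper bound on the needed lookahead, I would argue that Player~$O$ wins $\delaygame{L_n'}$ for a constant delay function with $f(0) > 2^n$. After Player~$I$ plays~$u_0$ of length at least $2^n + 1$, the suffix $\alpha_1\cdots\alpha_{f(0)-1}$ has length at least $2^n$. By the counting argument reproduced in the proof of Theorem~\ref{thm_lowerboundsreach}, it is therefore not in $L$, i.e.\ contains a bad $j$-pair for some~$j \in \SigmaO$. Player~$O$ commits $\beta_0 = j$; as the automaton subsequently processes this prefix it enters~$C_j$ upon reading the second~$j$ of that pair, and from then on every $\sharp$-move is a self-loop on~$C_j$, so safety is preserved regardless of how Player~$I$ continues.

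The main obstacle is to ensure robustness of Player~$O$'s strategy against Player~$I$ placing~$\sharp$ early inside~$u_0$: Player~$I$ might try to have the automaton read~$\sharp$ before any bad pair has been witnessed. Resolving this requires a careful specification of the $\sharp$-transitions---in particular, making sure that $\sharp$ is truly harmless once a bad $j$-pair has been detected while still providing Player~$I$ a forced route to~$q_R$ from every $\sharp$-free bad-pair-free configuration---and verifying that a constant $f(0) > 2^n$ is indeed sufficient to guarantee that every $u_0$ Player~$O$ might see contains a usable bad pair in the prefix preceding any~$\sharp$. I expect this interaction between $\sharp$-timing and the $2^n$ counting bound to be the technically delicate step of the argument.
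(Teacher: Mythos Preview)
Your construction, as described, does not satisfy the second bullet of the theorem: Player~$I$ wins $\delaygame{L_n'}$ for \emph{every} delay function, not just those with $f(0)\le 2^n$. He simply places $\sharp$ at position~$1$, i.e., $u_0 = a\,\sharp\,\cdots$ for arbitrary~$a$. Whatever $j\in\{1,\dots,n\}$ Player~$O$ commits to, after the branching on $\alpha_0$ the automaton sits in a non-sink state of the $j$-th branch; it then reads $\alpha_1=\sharp$ and enters $q_R$. You correctly flag this as ``the main obstacle'', but the sentence you hope to verify---that every $u_0$ contains a usable bad pair \emph{preceding any} $\sharp$---is simply false in this scenario, so no amount of care with the $\sharp$-transitions alone will rescue the argument while $\SigmaO=\{1,\dots,n\}$.

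The paper's resolution is to enlarge Player~$O$'s alphabet to $\{1,\dots,n,\sharp\}$ as well and to add one further branch automaton~$\autb_\sharp'$ selected when $\beta_0=\sharp$. This branch is arranged so that Player~$O$ wins if Player~$I$ plays a $\sharp$ before producing two occurrences of the letter~$n$, and loses if two $n$'s appear first. The case analysis for $f(0)>2^n$ then becomes a clean trichotomy on $u_0'$ (the word $u_0$ with its first letter removed): if $u_0'$ contains a $\sharp$ preceded by at most one~$n$, Player~$O$ answers~$\sharp$; if two $n$'s precede the first~$\sharp$, those form a bad $n$-pair and she answers~$n$; if $u_0'$ contains no~$\sharp$, the length bound forces a bad $j$-pair as in the reachability proof. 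Conversely, for $f(0)\le 2^n$ Player~$I$ opens with the prefix of $1\,w_n$ and then, if Player~$O$ chose~$\sharp$, continues with $n^\omega$ (forcing two $n$'s before any $\sharp$), and otherwise plays~$\sharp$ as in your argument. The extra $\sharp$-branch for Player~$O$ is the missing idea.
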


\begin{proof}
  Let $ \Sigma_{I} = \Sigma_{O} = \set{1,2,\ldots,n,\sharpsym} $ and let $
  w_{n} $ be defined as above. We introduce a new automaton $
  \autbsharp' $ and extend every automaton $ \autb_{j} $  from the previous proof to $
  \autb_{j}' $ as depicted in Figure~\ref{fig_lowerboundsafety}. The
  automaton $ \aut_{n}' $ is constructed as in the previous proof
  using the automata~$ \autb_{j}' $ and $\autb_{\sharpsym}'$ instead of the $ \autb_{j} $.
  
  \begin{figure}[ht]
    \centering

    \begin{tikzpicture}
      \clip (-2.75,-1.99) rectangle (9.3,1.75);

      \node at (0,0) {
        \begin{tikzpicture}
          [thick, scale =.7, node distance = 2cm]
      
          \node[state, accepting, initial by arrow] (A) at (0,0) {};
          \node[state, accepting] (B) at (3,0) {};
          \node[state, accepting] (C) at (6,0) {};
          \node[state] (D) at (1.5,-2.2) {};

          \path 

          (A) 
          edge[loop above] node[above] {$ \SigmaI \setminus \set{j,\sharpsym} $} (A)
          edge[bend left=20] node[above] {$ j $} (B)
          edge node[left] {$ \sharpsym $} (D)

          (B) 
          edge[loop above] node[above] {$ <\!j $} (B)
          edge[bend left=20] node[below] {$ >\!j $} (A)
          edge node[above] {$ j $} (C)
          edge node[right] {$ \sharpsym $} (D)

          (C) 
          edge[loop above] node[above] {$ \SigmaI $} (C)

          (D) 
          edge[loop right] node[right] {$ \SigmaI $} (D)

          ; 
        \end{tikzpicture}
      };

      \node at (6.4,0) {
        \begin{tikzpicture}
          [thick, scale =.7, node distance = 2cm]
      
          \node[state, accepting, initial by arrow] (A) at (0,0) {};
          \node[state, accepting] (B) at (3,0) {};
          \node[state] (C) at (6,0) {};
          \node[state, accepting] (D) at (1.5,-2.2) {};

          \path 

          (A) 
          edge[loop above] node[above] {$ \SigmaI \setminus \set{n,\sharpsym} $} (A)
          edge[bend left=20] node[above] {$ n $} (B)
          edge node[left] {$ \sharpsym $} (D)

          (B) 
          edge[loop above] node[above] {$ <\!n $} (B)
          edge node[above] {$ n $} (C)
          edge node[right] {$ \sharpsym $} (D)

          (C) 
          edge[loop above] node[above] {$ \SigmaI $} (C)

          (D) 
          edge[loop right] node[right] {$ \SigmaI $} (D)

          ; 
        \end{tikzpicture}
      };

      \node at (-2.3,-1.6) {(a)};
      \node at (4.1,-1.6) {(b)};
    \end{tikzpicture}

    \caption{(a) Automaton~$ \autb_{j}' $ for $ j \in \SigmaI \setminus
      \set{ \sharpsym } $. (b) Automaton~$ \autbsharp' $. }
    \label{fig_lowerboundsafety}
  \end{figure}

 \noindent For $
  f(0) > 2^{n} $, Player~$ O $ wins the game: assume Player~$ I $
  picks $ u_0 $ in the first round and let $u_0'$ be $u_0$ without its first letter. If $u_0'$ contains a~$ \sharpsym $
  preceded by at most one $ n $, then Player~$ O $ answers with $
  \sharpsym $ in the first round. If there is more than one $ n $ before the first $\sharpsym$ in $u_0'$, then she answers with $ n
  $. Finally, if there is no $ \sharpsym $ in $ u_0' $, she can pick a $ j $ such that $u_0'$ contains a bad $j$-pair. All outcomes are winning for Player~$O$.
  
  Player~$ I $ still wins the game for a constant delay function~$f$ with $ f(0) \leq
  2^{n} $ by picking the prefix of $1 w_{n} $ of length~$f(0)$ in the first round: if Player~$ O $ picks some $ j
  \in \Sigma_{O} \setminus \set{\sharpsym} $ in the first round, then Player~$ I $ just
  has to answer with $ \sharpsym $. Otherwise, if Player~$ O $ picks $
  \sharpsym $ in the first round, then Player~$ I $ continues with~$ n^{\omega} $. He wins in both situations. 
\end{proof}

The aforementioned constructions also work for constant-size
alphabets, if we encode every $ j \in \set{1,\ldots,n} $ in binary
with the most significant bit in the first position. Then, the natural ordering
on $ \set{ 1, \ldots, n } $ is exactly the lexicographical ordering on
the corresponding bit-string representation. Accordingly, we can
encode every $ \autb_{j} $, $ \autb_{j}' $, and $ \autbsharp' $ in
logarithmic size in $ n $, as deciding whether the input
represents $ j $, is larger than $ j $, or smaller than $ j $ can be
checked bit-wise. Together with a binary decision tree of size $\bigo(n)$ for the initial choice
of Player~$ O $ we obtain deterministic automata~$ \aut_{n} $ and $ \aut_{n}' $ whose sizes are in $ \bigo(n\log n) $. It is open whether linear-sized automata and a constant-sized alphabet can be achieved simultaneously.


\section{Computational Complexity of Delay Games}
\label{sec_complexity}
In this section, we determine the computational complexity of solving
delay games. First, we consider the special case of reachability
conditions and prove such games to be $ \pspace $-complete. Then, we
show that games with safety conditions are $ \exptime $-hard. The latter
 bound is complemented by an $ \exptime $-algorithm for solving
delay games with parity conditions. From this algorithm, we also
deduce an exponential upper bound on the necessary lookahead for
Player~$ O $, which matches the lower bounds given in the previous
section.

\subsection{Reachability Conditions}
\label{subsec_reachcomplex}
Recall that universality of the projection to the first component of the winning condition is
a necessary condition for Player~$ O $ for having a winning strategy in a
delay game. Our first result in this section states that universality
is also sufficient in the case of reachability winning conditions. Thus, solving delay games with reachability conditions is equivalent, via linear time reductions, to the universality problem for non-deterministic reachability automata, which is $\pspace$-complete (see Appendix~\ref{sec_universality}). Therefore, solving delay games with reachability conditions is $\pspace$-complete as well. Also, our proof yields an exponential upper bound on the necessary lookahead.

\begin{thm} \label{theorem_reachcomplex} Let $ L = \lexists{\aut} $,
  where $ \aut $ is a non-deterministic reachability automaton. The following are equivalent:

  \begin{enumerate}
  
  \item \label{reach_Owins} Player~$ O $ wins $ \delaygame{L} $ for
    some delay function~$ f $.

  \item \label{reach_Owinswithconstantdelay} Player~$ O $ wins $
    \delaygame{L} $ for some constant delay function~$ f $ with $ f(0)
    \leq 2^{\size{\aut}} $.

  \item\label{reach_universality} $ \proj{L} $ is universal.

  \end{enumerate}
\end{thm}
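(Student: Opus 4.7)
\medskip

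\noindent\textbf{Proof proposal.} The plan is to establish the cycle of implications (\ref{reach_universality}) $\Rightarrow$ (\ref{reach_Owinswithconstantdelay}) $\Rightarrow$ (\ref{reach_Owins}) $\Rightarrow$ (\ref{reach_universality}). The implication (\ref{reach_Owinswithconstantdelay}) $\Rightarrow$ (\ref{reach_Owins}) is immediate, and (\ref{reach_Owins}) $\Rightarrow$ (\ref{reach_universality}) is Proposition~\ref{prop_necccond}. Hence the entire content lies in (\ref{reach_universality}) $\Rightarrow$ (\ref{reach_Owinswithconstantdelay}): assuming $\proj{L}$ is universal, I must exhibit a winning strategy for Player~$O$ in $\delaygame{L}$ for the constant delay function~$f$ with $f(0) = 2^{\size{\aut}}$.

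First, I would reformulate the universality hypothesis in terms of the finite-word language $\lstar{\proj\aut}$. By the remark, $\proj{L} = \proj{\lexists{\aut}} = \lexists{\proj\aut}$, and by completeness $\lexists{\proj\aut} = \lstar{\proj\aut}\cdot\SigmaI^\omega$. So universality of $\proj L$ is equivalent to the statement that every $\alpha \in \SigmaI^\omega$ has some finite prefix in $\lstar{\proj\aut}$. I would then apply the standard subset construction to $\proj\aut$, obtaining a deterministic automaton $\autd$ on at most $2^{\size{\aut}}$ states; declare a state $S \subseteq Q$ of $\autd$ \emph{good} if it contains an accepting state of $\aut$, and turn all good states into sinks. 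Universality now says that no infinite path in $\autd$ avoids good states forever, so the subgraph of non-good states is a DAG. Because it has at most $2^{\size{\aut}}$ vertices, every word of length $2^{\size{\aut}}$ over $\SigmaI$ already has a prefix in $\lstar{\proj\aut}$.

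With this bound in hand, the strategy for Player~$O$ is straightforward. In round $0$, Player~$I$ commits to a word $u_0 \in \SigmaI^{2^{\size{\aut}}}$. By the preceding paragraph, $u_0$ has some prefix $u_0' \sqsubseteq u_0$ of length $k \leq 2^{\size{\aut}}$ with $u_0' \in \lstar{\proj\aut}$; unfolding the projection, there is a witness $v \in \SigmaO^{k}$ with $\binom{u_0'}{v} \in \lstar{\aut}$. Player~$O$ plays the letters of $v$ in rounds $0,1,\ldots,k-1$ and arbitrary letters thereafter. Since $f(i)=1$ for $i>0$, after round $k-1$ exactly $k$ input letters beyond $u_0$ have been added, so the resulting play $\binom{\alpha}{\beta}$ satisfies $\binom{u_0'}{v} \sqsubseteq \binom{\alpha}{\beta}$. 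An accepting finite run of $\aut$ on $\binom{u_0'}{v}$ extends (by completeness) to a run on $\binom{\alpha}{\beta}$ that visits an accepting state, so $\binom{\alpha}{\beta} \in \lexists{\aut} = L$.

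I expect no serious obstacle: the only delicate point is keeping the quantifier alternation correct when passing between $\proj{\lexists{\aut}}$ and $\lstar{\proj\aut}\cdot\SigmaI^\omega$, and verifying that the witness $v$ chosen in round $0$ (which depends only on $u_0$) can be played letter-by-letter since the remaining input arrives one letter at a time and thus cannot outpace $v$. The exponential bound $2^{\size{\aut}}$ on $f(0)$ is precisely the worst-case diameter of the non-good part of the subset DFA, and cannot be improved in general — this matches the lower bound of Theorem~\ref{thm_lowerboundsreach}.
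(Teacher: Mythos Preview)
Your proposal is correct and follows essentially the same approach as the paper: the cycle of implications is the same, and the substantive step (\ref{reach_universality})~$\Rightarrow$~(\ref{reach_Owinswithconstantdelay}) is argued by showing that every input word of length~$2^{\size{\aut}}$ already has a prefix in $\lstar{\proj{\aut}}$, then letting Player~$O$ pick a matching completion. The only cosmetic difference is that the paper obtains the length bound via the complement automaton~$\aut^c$ and K\"onig's Lemma (finiteness of $\lstar{\aut^c}$), whereas you argue it directly on the subset automaton by observing that the non-good reachable states form a DAG; these are two presentations of the same determinization-based pumping argument.
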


\begin{proof} 
  The implication~(\ref{reach_Owinswithconstantdelay})~$ \Rightarrow $~(\ref{reach_Owins}) is trivial and the implication~(\ref{reach_Owins})~$ \Rightarrow
  $~(\ref{reach_universality}) is given by
  Proposition~\ref{prop_necccond}.  It remains to show
  (\ref{reach_universality})~$ \Rightarrow $~(\ref{reach_Owinswithconstantdelay}). 

 Let $\proj{L}$ be universal. First, recall that the projection automaton $\proj{\aut}$ recognizes $\proj{L}$ and that $\lexists{\proj{\aut}} = \lstar{\proj{\aut}} \cdot \SigmaI^\omega$. Using a pumping argument, we show that on every input $\proj{\aut}$ has a run reaching an accepting state after at most $2^{\size{\aut}}$ steps. Thus, given the first move~$\alpha(0) \cdots \alpha(f(0)-1)$ of Player~$I$ with $f(0) \ge 2^{\size{A}}$, Player~$O$ wins by picking a suitable completion that forces the run of $\aut$ into an accepting state.
 
  To formalize this, we assume w.l.o.g.\ that the
  accepting states of $ \aut $ are sinks, which implies that $
  \lstar{\proj{\aut}} $ is suffix-closed, i.e., $ w \in
  \lstar{\proj{\aut}} $ implies $ ww' \in \lstar{\proj{\aut}} $ for
  every $ w' \in \SigmaI^{*} $. Furthermore, let $ \aut^{c} $ be
  an automaton recognizing the complement of $ \lstar{\proj{\aut}} $,
  which is prefix-closed, as it is the complement of a suffix-closed
  language. We can choose $ \aut^{c} $ such that $ \size{\aut^{c}}
  \leq 2^{\size{\aut}} $.

  We claim that $ \lstar{\aut^{c}} $ is finite. Assume it is infinite.
  Then, by König's Lemma there is an infinite word~$ \alpha $ whose
  prefixes are all in $ \lstar{\aut^{c}} $. Due to universality, we
  have $ \alpha \in \lexists{\proj{\aut}} $, i.e., there is a
  prefix of $ \alpha $ in $ \lstar{\proj{\aut}} $. Thus, the prefix is
  in $\lstar{\proj{\aut}}$ and in the complement~$ \lstar{\aut^c} $
  yielding the desired contradiction. An automaton with $ n $ states
  with a finite language accepts words of length at
  most~$ n-1 $. Thus, $ w \in \lstar{\proj{\aut}} $ for
  every $ w \in \SigmaI^{*} $ with $ |w| \geq 2^{\size{\aut}} $.

  Using this, we show that Player~$ O $ wins $ \delaygame{L} $ if $
  f(0) \ge 2^{\size{\aut}} $. Player~$ I $ has to pick $ f(0) $ letters
  with his first move, say $ u_{0} = \widx{\alpha}{0} \cdots
  \widx{\alpha}{f(0)-1}$. As $ f(0) $ is large enough, we have $
  u_{0} \in \lstar{\proj{\aut}} $. Hence, there is a word~$
  \widx{\beta}{0} \cdots \widx{\beta}{f(0)-1} \in \SigmaO^{*} $ such
  that ${\alpha(0) \choose \beta(0)} \cdots {\alpha(f(0)-1) \choose
    \beta(f(0)-1)} \in \lstar{\aut}$. By picking $
  \widx{\beta}{0}, \ldots, \widx{\beta}{f(0)-1} $ in the first $ f(0)
  $ rounds, Player~$ O $ wins the play, no matter how it is
  continued. Hence, she has a winning strategy.
\end{proof}

\noindent The exponential upper bound on the necessary lookahead to
win delay games with reachability conditions matches the lower bound
presented in the previous section. 

Theorem~\ref{theorem_reachcomplex} shows that
solving delay games with reachability conditions is equivalent to
universality of non-deterministic reachability automata, which is $\pspace$-complete (see Appendix~\ref{sec_universality}). Thus, we obtain the complexity of solving delay games with reachability conditions as a corollary of Theorem~\ref{theorem_reachcomplex}.

\begin{cor}
  The following problem is $ \pspace $-complete: Given a non\--deter\-ministic reachability
  automaton~$ \aut $, does Player~$ O $ win
  $ \delaygame{\lexists{\aut}} $ for some $ f $?
\end{cor}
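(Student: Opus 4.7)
The plan is to derive this corollary as an essentially immediate consequence of Theorem~\ref{theorem_reachcomplex} together with the $\pspace$-completeness of universality for non-deterministic reachability automata (Appendix~\ref{sec_universality}). The heavy lifting has already been done, so both the upper bound and the lower bound reduce to routine polynomial-time (indeed, logarithmic-space) reductions between the delay game problem and the universality problem for the projection automaton.

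For membership in $\pspace$, I would first construct the projection automaton~$\proj{\aut}$, which is a non-deterministic reachability automaton over~$\SigmaI$ of the same size as~$\aut$ and computable in linear time. By Theorem~\ref{theorem_reachcomplex}, Player~$O$ wins $\delaygame{\lexists{\aut}}$ for some~$f$ if and only if $\proj{\lexists{\aut}} = \lexists{\proj{\aut}}$ is universal. Hence invoking the $\pspace$ universality algorithm on $\proj{\aut}$ decides the delay game problem in polynomial space.

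For $\pspace$-hardness, I would reduce from universality of non-deterministic reachability automata in the other direction. Given such an automaton~$\autb$ over an alphabet~$\Sigma$, I set $\SigmaI = \Sigma$ and $\SigmaO = \set{b}$, making $\SigmaO$ a singleton so that Player~$O$ has no genuine choice of letters. I then construct an automaton~$\aut$ over $\SigmaI \times \SigmaO$ from~$\autb$ by relabeling every $a$-transition as a ${a \choose b}$-transition, so that $\proj{\aut}$ is isomorphic to~$\autb$. By Theorem~\ref{theorem_reachcomplex}, $\autb$ is universal if and only if Player~$O$ wins $\delaygame{\lexists{\aut}}$ for some~$f$, and the reduction is clearly computable in logarithmic space.

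There is no real obstacle here, since Theorem~\ref{theorem_reachcomplex} already establishes the key equivalence and the appendix supplies the complexity of universality. The only minor point worth verifying is that the delay game framework accommodates the hardness reduction, which it does via the singleton output alphabet, as the definition of a delay game only requires $\SigmaO$ to be a non-empty finite set.
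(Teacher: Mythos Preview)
Your proposal is correct and follows the same approach as the paper: the corollary is obtained directly from the equivalence in Theorem~\ref{theorem_reachcomplex} together with the $\pspace$-completeness of universality for non-deterministic reachability automata from Appendix~\ref{sec_universality}. The paper merely states that the two problems are equivalent via linear-time reductions without spelling them out, whereas you make both reductions explicit (in particular the singleton-$\SigmaO$ trick for hardness), which is a fine elaboration of exactly the intended argument.
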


Another consequence of the proof of
Theorem~\ref{theorem_reachcomplex} concerns the strategy complexity of
delay games with reachability conditions: if Player~$O$ wins for some
delay function, then she has a winning strategy that receives
exponentially many input letters and answers by also giving
exponentially many output letters and thereby already guarantees a
winning play, i.e., all later moves are irrelevant. Thus, the
situation is similar to classical reachability games on graphs, in
which positional attractor strategies allow a player to guarantee a
win after a bounded number of moves. The strategy described above can
be implemented by a lookup table that maps all minimal words in $
\lstar{\proj{\aut}} $ to a word in $ \SigmaO^{*} $ of the same length
such that the combined word is accepted by $ \aut $.

Finally, note that both upper bounds, the one on the lookahead and the one on the complexity, hold for non-deterministic automata while the lower bounds already hold for deterministic automata.

\subsection{Safety Conditions}
\label{subsec_safetycomplex}
Unsurprisingly, Example~\ref{example_introdelaygame}.\ref{example_introdelaygame_predictnona}
shows that Theorem~\ref{theorem_reachcomplex} does not hold for safety
conditions: the projection~$ \proj{L_{1}} $ is universal, but Player~$
O $ has no winning strategy for any delay function. It turns out that safety conditions are even harder than reachability conditions (unless $\pspace$ equals $\exptime$): we show solving delay games with safety conditions to be $ \exptime$-hard by a reduction from the acceptance problem for alternating polynomial space Turing machines~\cite{ChandraKS81}. 

\begin{thm}
	\label{thm_safetyhardness}
  The following problem is $ \exptime $-hard: Given a deterministic
  safety automaton $ \aut $, does Player~$ O $ win $
  \delaygame{\lall{\aut}} $ for some $ f $?
\end{thm}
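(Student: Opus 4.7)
The plan is to give a polynomial-time reduction from the acceptance problem for alternating polynomial-space Turing machines, which is $\exptime$-complete~\cite{ChandraKS81}. Fix such a machine $\tm$ with state set $Q = Q_\exists \cup Q_\forall \cup \{q_\acc, q_\rej\}$, tape alphabet $\Gamma$, transition relation $\delta$ and polynomial space bound $p$, and fix an input $x$ of length $n$. I construct in polynomial time a deterministic safety automaton $\aut$ over $\SigmaI \times \SigmaO$ such that Player~$O$ wins $\delaygame{\lall{\aut}}$ if and only if $\tm$ accepts $x$.

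Both players output blocks of length $L = p(n) + O(1)$, each encoding a configuration of $\tm$ together with the transition used to enter it; concretely, $\SigmaI = \SigmaO = \Gamma \cup (Q \times \Gamma) \cup \{\#\} \cup \delta$ and every well-formed block has the form $\sigma\, c\, \#$, where $c \in (\Gamma \cup Q{\times}\Gamma)^{p(n)}$ contains exactly one head symbol and $\sigma \in \delta$ is the incoming transition. The crucial synchronisation is a shift of one full block: in round $iL + j$, Player~$I$'s letter is the $j$-th symbol of her $i$-th block (encoding configuration $c_i$) while Player~$O$'s letter is the $j$-th symbol of her $i$-th block (encoding configuration $c_{i+1}$). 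With a constant delay function satisfying $f(0) \geq L + 1$, Player~$O$ sees Player~$I$'s entire block for $c_i$, and thus the state of $c_i$, before committing to any cell of $c_{i+1}$.

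The safety automaton $\aut$ uses a bounded sliding window over both tracks to verify, one cell at a time: (i)~Player~$I$'s first block encodes the initial configuration for $x$; (ii)~both blocks are syntactically well-formed; (iii)~every cell $c_{i+1}[j]$ written by Player~$O$ is the correct local update of $c_i[j{-}1], c_i[j], c_i[j{+}1]$ under the transition $\sigma_i$, which is placed within the blocks so that the automaton already knows $\sigma_i$ when it reaches that cell; (iv)~at a configuration whose head state lies in $Q_\forall$ the effective transition equals the one written by Player~$I$, and at $Q_\exists$ the one written by Player~$O$; and (v)~the state $q_\rej$ never appears in any configuration. Any violation caused by Player~$O$ drives the automaton into a fresh rejecting sink, whereas any violation caused by Player~$I$ (including a wrong first configuration) drives it into an accepting sink; reaching $q_\acc$ likewise sends the automaton to an accepting sink. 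Since the only memory $\aut$ needs beyond its block-position counter and current-state tracker is a constant-width window of symbols on each track, $\aut$ has size polynomial in $|\tm|$ and $p(n)$ and is manifestly deterministic.

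For correctness, if $\tm$ accepts $x$ then a memoryless existential strategy exists on the accepting subtree: thanks to her lookahead, Player~$O$ can at each block observe the state of $c_i$ before her $c_{i+1}$ begins, so she can commit to the existential transition prescribed by the strategy and write the true successor configuration, avoiding all rejecting sinks until $q_\acc$ appears. If $\tm$ rejects $x$, Player~$I$ plays faithfully --- the correct initial configuration, the true $\delta$-successor of every configuration, and at universal states a transition staying inside a rejecting subtree --- so every response of Player~$O$ either writes a cell inconsistent with the triple the automaton is about to reveal, chooses the wrong transition at a universal state, or produces an honest $\tm$-computation that eventually reaches $q_\rej$. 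The main obstacle is engineering the asymmetry between the two sinks, so that cheating by Player~$I$ actually rewards Player~$O$ while cheating by Player~$O$ is punished, together with organising the positions of the $\sigma_i$-letters within blocks so that every consistency check only needs a constant-width window of the joint input.
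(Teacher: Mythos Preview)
Your construction has a genuine gap in how Player~$I$'s configurations after the first are constrained. The only successor check you describe is (iii), which compares Player~$O$'s $c_{i+1}$ (her block~$i$) against Player~$I$'s $c_i$ (his block~$i$) cell by cell; this is indeed a constant-width check. But for the game to simulate a single run of $\tm$, Player~$I$'s block~$i{+}1$ must encode the \emph{same} $c_{i+1}$ that Player~$O$ just produced, and nothing in your automaton enforces this. Without it, Player~$I$ can legally write an arbitrary well-formed configuration in each block~$i\ge 1$ (for instance one whose only successor contains $q_{\mathrm{rej}}$), forcing Player~$O$ via (iii) and (v) into the rejecting sink regardless of whether $\tm$ accepts $x$. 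If you try to add the missing constraint, two obstacles appear. First, ``Player~$I$'s block $i{+}1$ equals Player~$O$'s block~$i$'' compares positions $L$ apart, so a deterministic automaton would need to buffer an entire block, which is exponential. Second, and more fundamentally, with $f(0)=L+1$ Player~$I$ begins his block~$i{+}1$ at position $(i{+}1)L$ having seen Player~$O$'s moves only through position $iL-1$; in particular he has \emph{not} seen Player~$O$'s existential choice $\sigma_i$ (written at position~$iL$). Hence your claim in the reject case that ``Player~$I$ plays faithfully --- the true $\delta$-successor of every configuration'' cannot be realised at existential states: the lookahead goes the wrong way for Player~$I$.

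The paper's proof avoids both problems by inverting the roles: Player~$I$ is existential and is the sole producer of configurations, while Player~$O$ is universal and merely picks transitions and raises error flags. The two missing mechanisms are (a)~a \emph{copy} separator that lets Player~$I$ repeat the current configuration until Player~$O$'s delayed universal choice has arrived, and (b)~an \emph{error-claiming} letter for Player~$O$, so that the automaton only has to verify the \emph{first} claimed cell (storing a constant amount of information across one block length) rather than every cell. With these devices the reduction shows that $\tm$ \emph{rejects} $x$ iff Player~$O$ wins, which is still fine since $\textsc{APSpace}$ is closed under complement.
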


\begin{proof}
Let $ \tm = (Q, Q_\exists, Q_\forall, \Sigma, q_I, \Delta, \qacc, \qrej)$ be an alternating polynomial space Turing machine, where $\Delta \subseteq Q \times \Sigma \times Q \times \Sigma \times \set{-1, 0, 1}$ is the transition relation, and let $x \in \Sigma^*$ be an input. For technical reasons, we assume the accepting state~$\qacc$ and the rejecting state~$\qrej$ to be equipped with a self-loop. Furthermore, let $p$ be a polynomial that bounds $\tm$'s space consumption. We construct a safety automaton~$\aut$ of polynomial size in $\size{\Delta}$ and $p(\size{x})$ such that $\tm$ rejects $x$ if and only if Player~$O$ wins $\delaygame{\lall{\aut}}$ for some $f$. This is sufficient, since $\apspace = \exptime$~\cite{ChandraKS81} is closed under complement. Thus, we give Player~$I$ control over the existential states while Player~$O$ controls the universal ones. Additionally, Player~$I$ is in charge of producing all configurations with his moves. He can copy configurations in order to wait for Player~$O$'s choices for the universal transitions, which are delayed due to the lookahead.

Formally, the input alphabet~$\SigmaI$ contains $\Sigma \cup Q$ and two separators~$\fresh$ and $\ccopy$ while the output alphabet~$\SigmaO$ contains $\Delta$ and two signals $\copyerror$ and $\noerror$. Intuitively, Player~$I$ produces configurations of $\tm$ of length $p(\size{x})$ preceded by either $\ccopy$ or $\fresh$ to denote whether the configuration is a \emph{copy} of the previous one or a \emph{new} one. Copying configurations is necessary to bridge the lookahead while waiting for Player~$O$ to determine the transition that is applied to a universal configuration. Player~$I$ could copy a configuration ad infinitum, but this will be losing for him, unless it is an accepting one. Player~$O$ chooses universal transitions at every separator\footnote{If the following configuration is existential or the separator is a $\ccopy$, then her choice is ignored.} by picking a letter from $\Delta$. At every other position, she has to pick a signal: $\copyerror$ allows her to claim an error in the configurations picked by Player~$I$ while $\noerror$ means that she does not claim an error at the current position.

The automaton~$\aut$ is the product of safety automata checking the following properties of an input word~${\widx{\alpha}{0} \choose \widx{\beta}{0}} {\widx{\alpha}{1} \choose \widx{\beta}{1}} {\widx{\alpha}{2} \choose \widx{\beta}{2}} \cdots \in (\SigmaI \times \SigmaO)^\omega$:
\begin{enumerate}
	\item $\alpha \in (\set{\fresh, \ccopy} \cdot \conf)^\omega$, where $\conf$ is the set of encodings of configurations of length~$p(\size{x})$, i.e., words of length $p(\size{x})+1$ over $\Sigma \cup Q$ that contain exactly one letter from $Q$. If this is not the case, then the product automaton~$\aut$ goes to an accepting sink, i.e., in order to win, Player~$I$ has to produce an $\alpha$ that satisfies the requirement.

\item $\beta \in (\Delta \cdot \set{\noerror, \copyerror}^{p(\size{x})+1})^\omega$. If this is not the case, then the product automaton~$\aut$ goes to a rejecting sink, i.e., in order to win, Player~$O$ has to produce a $\beta$ that satisfies the requirement.

\item The first configuration picked by Player~$I$ is the initial one of $\tm$ on $x$. If this is not the case, the product automaton~$\aut$ goes to an accepting sink.

\item If $\beta$ contains a $\copyerror$, then the automaton checks whether there is indeed an error by doing the following at the first occurrence of $\copyerror$: it stores the previous, the current, and the next input letter, the transition picked by Player~$O$ at the last separator~$\fresh$, and whether the current configuration is existential or universal. Some of this information has to be stored continuously, since these letters appear before the first $\copyerror$. This is possible using a set of states whose size is polynomial in $\size{\Sigma} +  \size{ Q } + \size{\Delta}$. 

Then, the automaton processes $p(\size{x})+1$ letters (and remembers whether it traverses the separator~$\fresh$ or $\ccopy$), and then checks whether the letter just reached is updated correctly or not:
\begin{itemize}
	\item If the separator is $\ccopy$, then the current letter is updated correctly, if it is equal to the marked one.
	\item If the separator is $\fresh$ and the configuration in which the error was marked is existential, then the letter is updated correctly, if there is a transition of $\tm$ that is compatible with the current letter and the marked one.
	\item If the separator is $\fresh$ and the configuration in which the error was marked is universal, then the letter is updated correctly, if it is compatible with the transition picked by Player~$O$ at the last separator~$ N $ before the $\copyerror$, which is stored by the automaton. If she has picked a transition that is not applicable to the current configuration, the product automaton~$\aut$ goes to a rejecting sink.

\end{itemize}
If the update is not correct, i.e., Player~$O$ has correctly claimed an error, then $\aut$ goes to an accepting sink. Otherwise, it goes to a rejecting sink, i.e., in order to win, Player~$O$ should only claim an error at an incorrect update of a configuration, but she wins if she correctly claims an error. All subsequent claims by Player~$O$ are ignored, i.e., after the first claim is evaluated, the play is either accepted or rejected, no matter how it is continued. 

\item Finally, if $\alpha$ contains the accepting state of $\tm$, then $\aut$ goes to a rejecting sink, unless Player~$O$ correctly claimed an error in a preceding configuration.

\end{enumerate}
All these properties can be checked by deterministic safety automata whose sizes are polynomial in the size of $\tm$ and $p(\size{x})$. All non-sink states of $\aut$ are accepting, i.e., as long as both players stick to their requirements on the format, Player~$I$ starts with the initial configuration, Player~$O$ does not incorrectly claim an error, and the accepting state of $\tm$ is not reached, then the input is accepted.

It remains to prove that $\tm$ rejects $x$ if and only if Player~$O$ wins $\delaygame{\lall{\aut}}$ for some~$f$.

\myquot{$\Rightarrow$}: Assume $\tm$ rejects $x$ and let $f$ be the constant delay function with $f(0) =  p(\size{x}) +3$. We show that Player~$O$ wins $\delaygame{\lall{\aut}}$. At every time, Player~$O$ has enough lookahead to correctly claim the first error introduced by Player~$I$, if he introduces one. Furthermore, she has access to the whole encoding of each universal configuration whose successor she has to determine. This allows her to simulate the rejecting run of $\tm$ on $x$, which does not reach the accepting state~$\qacc$, no matter which transitions Player~$I$ picks. Thus, he has to introduce an error in order to win, which Player~$O$ can detect using the lookahead. If Player~$I$ does not introduce an error, the play proceeds ad infinitum by repeating a rejecting configuration forever. In every case, $\aut$ accepts the resulting play, i.e., Player~$O$ wins. Thus, Player~$O$ has a winning strategy for $\delaygame{\lall{\aut}}$.

\myquot{$\Leftarrow$}: We show the contrapositive. Assume that $\tm$ accepts $x$ and let $f$ be an arbitrary delay function. We show that Player~$I$ wins $\delaygame{\lall{\aut}}$. Player~$I$ starts with the initial configuration and picks the successor configuration of an existential one according to the accepting run, and copies universal configurations as often as necessary to obtain a play prefix in which Player~$O$ has to determine the transition she wants to apply in this configuration. Thus, he will eventually produce an accepting configuration of $\tm$ without ever introducing an error. Hence, either Player~$O$ incorrectly claims an error or the play reaches an accepting state. In either case, Player~$I$ wins the resulting play, i.e., he has a winning strategy for $\delaygame{\lall{\aut}}$.
\end{proof}

It is noteworthy that the lower bound just proven does not require the full exponential lookahead that is in general necessary to win delay games with safety conditions: Player~$O$ wins the game constructed above with sublinear lookahead, as $p(\size{x})+3$ is smaller than the size of $\aut$. Thus, determining the winner of a delay game with safety condition with respect to linearly bounded delay is already $\exptime$-hard.

Finally, the lower bound just proven already holds for deterministic safety automata.

\subsection{Parity Conditions}
\label{subsec_paritycomplex}
In the previous two subsections, we showed solving delay games with
reachability conditions to be $ \pspace $-complete and solving games with safety conditions to be $\exptime$-hard. To conclude this section, we
complement the latter with an exponential time algorithm for
solving delay games with parity conditions. Thus, delay games with safety or parity conditions are $\exptime$-complete.
Also, we derive an exponential upper bound of the form~$2^{(nk)^2}$ on the necessary lookahead from the algorithm, where $n$ is the size and $k$ the number of colors of the automaton. Finally, we lower the upper bound to $2^{nk}$ via a direct pumping argument, which matches the lower bound from the previous section. Note that all results only hold for deterministic
automata.

\begin{thm} \label{thm_paritycomplexity_ub} The following problem is
  in $ \exptime $: Given a deterministic parity automaton~$ \aut $, does
  Player~$ O $ win $ \delaygame{\lpar{\aut}} $ for some delay function
  $ f $?
\end{thm}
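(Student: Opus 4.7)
The plan is to reduce the delay-game problem to a finite parity game~$\game$ of exponential size in $\size{\aut}$ and then invoke a standard parity-game algorithm. Since parity games on $N$ vertices with $k$ priorities are solvable in time $N^{\bigo(k)}$ and $\aut$ has at most $\size{\aut}$ colors, this yields the desired $\exptime$ bound. Let $k$ denote the maximum color of $\aut$.

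The construction is driven by an equivalence on finite input words that summarises everything Player~$O$ can achieve on a buffer of lookahead. For $u \in \SigmaI^{*}$ define the profile
\[
\curlyR(u) = \set{(q,q',c) \in Q \times Q \times \set{0,\ldots,k} \mid \exists v \in \SigmaO^{\size{u}}\colon \aut \text{ on } (u,v) \text{ started in } q \text{ reaches } q' \text{ with maximum color } c}.
\]
Two input words with identical profile are interchangeable for the rest of any play, and there are at most $2^{\size{Q}^{2}(k+1)}$ profiles, which is exponential in~$\size{\aut}$. Positions of~$\game$ are pairs $(\curlyR(u),q)$ consisting of the profile of the unconsumed buffer and the $\aut$-state reached by the committed prefix. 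Player~$I$ appends a letter $a \in \SigmaI$ to the buffer, and one checks that $\curlyR(ua)$ depends only on $\curlyR(u)$ and~$a$. Player~$O$ then \emph{commits} to her response to the first buffer letter by picking a triple in $\curlyR(u)$ compatible with the current $\aut$-state; this shrinks the profile, advances the $\aut$-state, and emits a color. The winning condition in~$\game$ is parity on the sequence of emitted colors.

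For correctness I would show both implications. Given a winning strategy of Player~$O$ in $\delaygame{\lpar{\aut}}$ for some~$f$, abstracting to profiles yields a winning strategy in~$\game$. Conversely, a positional winning strategy in~$\game$ lifts to a winning strategy in $\delaygame{\lpar{\aut}}$ for a constant delay function~$f$ whose value is bounded by the acyclic diameter of~$\game$, hence exponential in~$\size{\aut}$: Player~$O$ maintains the profile of the current buffer, reads off commitments from her $\game$-strategy, and realises each commitment~$(q,q',c)$ by the witness $v \in \SigmaO^{\size{u}}$ certified by $\curlyR(u)$, feeding back the first letter of~$v$ in the current round.

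The main obstacle is showing that the successive per-letter commitments made in~$\game$ can always be realised as a single coherent $\SigmaO$-stream in the delay game, despite Player~$O$ having to emit each output letter before seeing all future input letters. This compatibility is forced by the definition of the profile together with an inductive argument aligning the shrinking buffer with the chosen witnesses: the key point is that after any prefix of play the remaining lookahead buffer still carries a well-defined profile from which Player~$O$'s next $\game$-commitment remains feasible. The exponential upper bound on the necessary lookahead falls out of this argument, and the complexity bound follows from solving~$\game$ in time $\size{\game}^{\bigo(k)}$.
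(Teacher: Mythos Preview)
Your high-level plan---reduce to an exponential-size parity game whose positions abstract the lookahead buffer, then solve it---is exactly the paper's approach, and your profile $\curlyR(u)$ is essentially the paper's transition summary (there written $r_w^D$, a partial function from states of a color-tracking automaton to sets of such states). However, the game dynamics you describe do not work, and the fix is precisely the structural idea that the paper supplies.

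The problem is Player~$O$'s move. You write that she ``commits to her response to the first buffer letter by picking a triple in $\curlyR(u)$'', but a triple $(q,q',c)\in\curlyR(u)$ summarises a run over \emph{all} of $u$, not over its first letter. If you mean she consumes only the first letter, then the move is not well-defined on profiles: $\curlyR(u)$ determines neither the first letter of $u$ nor the profile $\curlyR(u')$ of the suffix $u=au'$ (profiles compose when you append on the right, but do not decompose when you peel from the left), so ``shrinks the profile'' has no meaning at the abstract level. If instead she consumes the whole buffer---which is what selecting a triple in $\curlyR(u)$ naturally encodes---then with the strict alternation you describe (Player~$I$ appends one letter, then Player~$O$ consumes) the buffer has length one whenever Player~$O$ moves, and you have rebuilt the delay-free game. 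In either reading there is no lookahead in $\game$ at all, so already the forward direction fails: Player~$O$ may win $\delaygame{\lpar{\aut}}$ for large $f$ yet lose your $\game$. The ``acyclic diameter'' argument cannot repair this, because the diameter bounds the length of paths in $\game$, not the size of the buffer carried by any vertex.

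The paper's remedy is to make the moves \emph{block-level}. Player~$I$ picks an abstract function $r$ (a profile restricted to a prescribed domain) and is kept one such block ahead of Player~$O$; Player~$O$ then selects a state in the image of the \emph{previous} block, committing to that entire block at once, and the domain of the next $r$ is forced to equal this image. The crucial further ingredient is that only functions with \emph{infinite} witness languages are admitted, so that in the correctness proof every abstract block can be instantiated by a concrete input word of whatever length the delay function requires. This block-ahead structure, together with the infinitude of witnesses, is what your construction is missing; it is also the real source of the exponential lookahead bound (the length beyond which every input word witnesses some admissible $r$), not the diameter of the game graph.
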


We proceed by constructing an exponentially-sized,
delay-free parity game with the same number of colors as $\aut$, which is won by
Player~$ O $ if and only if she wins $ \delaygame{\lpar{\aut}} $ for
some delay function $ f $. Intuitively, we assign to each potential lookahead~$w \in \SigmaI^*$ the behavior it induces on $\aut$, which is given by a function~$\resolve \colon Q \rightarrow 2^{Q \times \col(Q)}$. If $(q', c) \in \resolve(q)$, then $\aut$ has a run from $q$ to $q'$ with maximal color~$c$ on a word over $\SigmaI \times \SigmaO$ whose projection to $\SigmaI$ is $w$. Having the same behavior gives rise to an equivalence relation over $\SigmaI^*$ of exponential index. In the parity game we construct, Player~$I$ picks equivalence classes of this relation and Player~$O$ constructs a run on representatives. Player~$O$ wins, if this run is accepting. To account for the delay in the original game, Player~$I$ is always one move ahead. This gives Player~$O$ a lookahead of one equivalence class, which can be stored in the state space of the parity game. 

First, we adapt $ \aut $ to keep track of the maximal color visited during a run. Let $ \aut = (Q, \SigmaI \times \SigmaO, q_{I}, \delta, \col )$ with $
\col \colon Q \rightarrow\nats $ and let $ n = \size{Q \times \col(Q)} $. We define the color-tracking automaton~$ \autc = (Q_{ \autc }, \SigmaI \times \SigmaO, q_{I}^{ \autc },
\delta_{ \autc }, \col_{ \autc }) $  of size $n$ where 
\begin{itemize}
	\item $Q_{ \autc } = Q \times \col(Q) $,
	\item $ q_{I}^{ \autc } = (q_{I}, \col(q_{I})) $,
	\item $ \delta_{ \autc }((q,c), a) = (\delta(q,a), \max\set{ c, \col(\delta(q,a)) }) $, and
	\item $ \col_{ \autc }(q,c) = c $.
\end{itemize}
Note that $\autc $ does not recognize $ \lpar{\aut} $. However, we are not
interested in complete runs of $ \autc $, but only in runs on finite
play infixes.

\begin{rem}
\label{remark_autproduct}	
Let $w \in (\SigmaI \times \SigmaO)^*$ and let $(q_0, c_0)(q_1, c_1) \cdots (q_{|w|}, c_{|w|})$ be the run of $\autc$ on $w$ from some state~$(q_0, c_0) \in \set{ (q, \col(q)) \mid q \in Q}$. Then, $q_0 q_1 \cdots q_{|w|}$ is the run of $\aut$ on $w$ starting in $q_0$ and $c_{|w|} = \max\set{\col(q_{j}) \mid 0 \leq j \leq \size{w} } $. 
\end{rem}

In the following, we work with partial functions from $ Q_{ \autc } $ to $ 2^{Q_{ \autc }}
$, where we denote the domain of each such function~$ r $ by $ \dom(r)
$. Intuitively, we use $ r $ to capture the information encoded in the
lookahead provided by Player~$ I $. Assume Player~$ I $ has picked $
\widx{\alpha}{0} \cdots \widx{\alpha}{j} $ and
Player~$ O $ has picked $ \widx{\beta}{0} \cdots
\widx{\beta}{i} $ for $ i < j $ such that the lookahead is $ w =
\widx{\alpha}{i+1} \cdots \widx{\alpha}{j} $. Then,
we can determine the state~$ q $ that $ \autc $ reaches after
processing $ { \widx{\alpha}{0} \choose \widx{\beta}{0} }
  \cdots { \widx{\alpha}{i}
  \choose \widx{\beta}{i} } $, but the automaton cannot process $ w $,
since Player~$ O $ has not yet picked $ \widx{\beta}{i+1}
\cdots \widx{\beta}{j} $. However, we can determine
the states Player~$ O $ can enforce by picking an appropriate
completion, which will be the ones contained in $ r(q) $.

 To formalize the functions capturing the lookahead picked by Player~$I$, we
define $ \delta_{\autp} \colon 2^{Q_{ \autc }} \times \SigmaI \rightarrow2^{Q_{ \autc }} $ via $ \delta_{\autp}(S, a) =
  \bigcup_{q \in S} \bigcup_{b \in \SigmaO} \delta_{ \autc }(q,{ a \choose b })$,
i.e., $ \delta_{\autp} $ is the transition function of the powerset
automaton of $ \proj{\autc} $. As usual, we extend $ \delta_{\autp} $ to $
\delta_{\autp}^{*} \colon 2^{Q_{ \autc }} \times \SigmaI^{*} \rightarrow2^{Q_{ \autc }} $ via $
\delta_{\autp}^{*}(S, \epsilon) = S $ and $ \delta_{\autp}^{*}(S, wa) =
\delta_{\autp}(\delta_{\autp}^{*}(S,w), a) $.

Let $D \subseteq Q_{ \autc }$ be non-empty and let $w \in \SigmaI^*$. We define the function~$\resolve_w^D$ with domain $D$ as follows: for every $(q,c) \in D$, we have 
\[ \resolve_w^D(q,c) = \delta_{\autp}^*(\set{(q,\col(q))},w). \]
Note that we apply $\delta_{\autp}^{*}$ to $\set{(q, \col(q))}$, i.e., the second component is the color of $q$ and not the color~$c$ from the argument to $\resolve_w^D$. 
If $(q', c') \in \resolve_w^D(q,c)$, then there is a word~$w'$ whose projection is $w$ and such that the run of $\aut$ on $w'$ leads from $q$ to $q'$ and has maximal color~$c'$. Thus, if Player~$I$ has picked the lookahead~$w$, then Player~$O$ could pick an answer such that the combined word leads $\aut$ from $q$ to $q'$ with maximal color~$c'$. 
 
We call $w$ a witness for a partial function $\resolve \colon Q_{ \autc } \rightarrow 2^{Q_{ \autc }}$, if we have $\resolve = \resolve_w^{\dom(r)}$. Thus, we obtain a language~$\wit{\resolve} \subseteq \SigmaI^*$ of witnesses for each such function~$\resolve$. We define 
\[\curlyR = \set{\resolve \mid \dom(\resolve)\neq \emptyset \text{ and } \wit{\resolve} \text{ is infinite}}.\]

\begin{lem}
\label{lemma_graphautprops}
Let $\curlyR$ be defined as above.
\begin{enumerate}
	
	\item\label{lemma_graphautprops_witnonempty}
	Let $\resolve \in \curlyR$. Then, $\resolve(q) \not= \emptyset$ for every $q \in \dom(\resolve)$.
	
	\item\label{lemma_graphautprops_witdisjoint}
	Let $\resolve \neq \resolve' \in \curlyR$ such that  $\dom(\resolve) = \dom(\resolve')$. Then, $\wit{\resolve} \cap \wit{\resolve'} = \emptyset$.

	\item 
	\label{lemma_graphautprops_aut}
	Let $\resolve$ be a partial function from $Q_\autc$ to $2^{Q_\autc}$ with non-empty domain. Then, $\wit{\resolve}$ is recognized by a deterministic finite automaton with at most $2^{n^2}$ states.
			
	\item\label{lemma_graphautprops_witcomplete}
	Let $D \subseteq Q_{ \autc }$ be non-empty and let $w \in \SigmaI^*$ be such that $|w| \ge 2^{n^2}$. Then, there exists some $\resolve \in \curlyR$ with $\dom(\resolve) = D$ and $w \in \wit{\resolve}$.
	
\end{enumerate}
\end{lem}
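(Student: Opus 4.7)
The plan is to handle the four parts in order, with parts (3) and (4) relying on the automaton construction and a pumping argument, which is where the main quantitative work lies.

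For part (\ref{lemma_graphautprops_witnonempty}), I will use completeness of $\aut$ (inherited by $\autc$ and by its subset automaton). For any $w \in \SigmaI^*$ the set $\delta_\autp^*(\set{(q,\col(q))},w)$ is non-empty by a trivial induction on $|w|$, so for every witness $w$ of $\resolve$ and every $q \in \dom(\resolve)$ we have $\resolve(q) = \delta_\autp^*(\set{(q,\col(q))},w) \neq \emptyset$. Since $\resolve \in \curlyR$ has at least one (in fact infinitely many) witnesses, the claim follows.

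For part (\ref{lemma_graphautprops_witdisjoint}) I will simply unfold the definition: if $w \in \wit{\resolve} \cap \wit{\resolve'}$, then both $\resolve$ and $\resolve'$ coincide with $\resolve_w^{\dom(\resolve)} = \resolve_w^{\dom(\resolve')}$, contradicting $\resolve \neq \resolve'$.

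For part (\ref{lemma_graphautprops_aut}), the plan is to build a DFA whose states are the partial functions $s \colon \dom(\resolve) \rightarrow 2^{Q_\autc}$. The initial state is the function~$s_\epsilon$ with $s_\epsilon(q,c) = \set{(q,\col(q))}$, the transition function is $s \mapsto s'$ with $s'(q,c) = \delta_\autp(s(q,c), a)$, and the unique accepting state is $\resolve$ itself. By construction, after reading $w$ the DFA is in state~$\resolve_w^{\dom(\resolve)}$, so it accepts precisely $\wit{\resolve}$. The number of states is bounded by $(2^{|Q_\autc|})^{|\dom(\resolve)|} \le 2^{n^2}$, which is the main quantitative step and the only place where the bound $2^{n^2}$ actually appears.

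For part (\ref{lemma_graphautprops_witcomplete}), let $D \subseteq Q_\autc$ be non-empty and fix $w \in \SigmaI^*$ with $|w| \ge 2^{n^2}$. Consider the DFA from part (\ref{lemma_graphautprops_aut}) applied with $\dom = D$ (the target function being immaterial, as we only use the transition structure). Reading $w$ traverses at least $2^{n^2} + 1$ states, so by the pigeonhole principle two distinct prefixes $w_1 \sqsubsetneq w_2$ of $w$ lead to the same state, i.e.\ $\resolve_{w_1}^D = \resolve_{w_2}^D$. Writing $w = w_1 x y$ with $w_2 = w_1 x$, pumping $x$ yields infinitely many words~$w_1 x^k y$ (and their prefixes up to any length) on which the DFA reaches the same state as on $w$; each of these is therefore a witness for $\resolve_w^D$. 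Hence $\wit{\resolve_w^D}$ is infinite, so $\resolve_w^D \in \curlyR$, and by construction $w \in \wit{\resolve_w^D}$ and $\dom(\resolve_w^D) = D$.

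The main obstacle is the state-space bookkeeping for part~(\ref{lemma_graphautprops_aut}): one has to verify that the transition $s \mapsto s'$ really computes $\resolve_{wa}^D$ from $\resolve_w^D$, which follows from $\resolve_{wa}^D(q,c) = \delta_\autp^*(\set{(q,\col(q))},wa) = \delta_\autp(\resolve_w^D(q,c),a)$; once this is checked, parts (\ref{lemma_graphautprops_witnonempty}), (\ref{lemma_graphautprops_witdisjoint}), and (\ref{lemma_graphautprops_witcomplete}) are straightforward.
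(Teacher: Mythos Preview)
Your proposal is correct and follows essentially the same approach as the paper. The only cosmetic difference is in part~(\ref{lemma_graphautprops_aut}): the paper describes the DFA as a product of $|\dom(\resolve)|$ copies of the powerset automaton~$\autp$ (states are tuples in $(2^{Q_\autc})^{|\dom(\resolve)|}$), while you describe the states as functions $\dom(\resolve) \to 2^{Q_\autc}$; these are the same object and yield the same bound $(2^{n})^{|\dom(\resolve)|} \le 2^{n^2}$.
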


\begin{proof} The first statement follows from completeness of the automata~$\aut$, $\autc$, and $\autp$ while the second one follows from the definition of $\resolve_w^D$, which is uniquely determined by $w$ and $D$. Hence, a fixed $w$ cannot witness two different functions~$\resolve$ and $\resolve'$ with the same domain. 

To prove the third statement, fix some  partial function $\resolve $ from $Q_\autc$ to $2^{Q_\autc}$ with domain~$D = \set{(q_1, c_1), \ldots, (q_{\size{D}}, c_{\size{D}})}$. Then, the product of $\size{D}$ copies of the automaton~$\autp$ with the initial state~$( \set{(q_1, \col(q_1))}, \ldots, \set{(q_{\size{D}}, \col(q_{\size{D}}))} )$ and the unique accepting state~$(\resolve(q_1,c_1), \ldots, \resolve(q_{\size{D}}, c_{\size{D}}))$ recognizes the witness language~$W_r$. As $\size{D} \le n$, the automaton has at most $2^{n^2}$ states. 

For proving the last statement, we fix some non-empty $D$ and some $w$ of length at least $2^{n^2}$. Define $\resolve = \resolve_w^D$, which implies $w \in \wit{\resolve}$ by definition. As just shown, there exists an automaton recognizing $\wit{\resolve}$ with at most $2^{n^2} \le \size{w}$ many states. Thus, the accepting run of the automaton on $w$ contains a state-repetition. Hence, $\wit{\resolve}$ is infinite, i.e., $\resolve \in \curlyR$.
\end{proof}

Now, we are able to define the equivalent delay-free parity game. As already alluded to, Player~$I$ picks elements from $\curlyR$ while Player~$O$ produces a run on witnesses, which corresponds to picking suitable completions to witnesses of the functions picked by Player~$I$. She wins, if the constructed run is accepting. Finally, to account for the lookahead, Player~$I$ is always one move ahead. 

To keep the equivalence proof between the two games simple,  we first give an abstract description of the delay-free game. Then, in the proof of our main theorem, we show how to model this game as a classical graph based parity game, which is solvable in exponential time.

Formally, the game~$ \game(\aut) $ 
is played between Player~$ I $ and Player~$ O $ in rounds $ i = 0, 1,
2, \ldots $ as follows: in each round, Player~$ I $ picks a function
from $ \curlyR $ and Player~$ O $ answers by a state of $\autc$ subject to the following constraints. In the first round, Player~$ I $ has to pick $ r_{0} \in \curlyR$ such that
\begin{equation} \label{constraint1}\tag{C1}
  \dom(r_{0}) = \set{q_{I}^{ \autc }}
\end{equation}
and Player~$ O $ has to answer by picking a state $ q_{0} \in \dom(r_0)$, which implies $q_0 = q_{I}^{ \autc } $.
Now, consider round $ i > 0 $: Player~$I$ has picked functions~$
r_{0}, r_{1}, \ldots, r_{i-1} $ and Player~$ O $ has picked states~$ q_{0},
q_{1}, \ldots, q_{i-1} $ with $ q_{i-1} \in \dom(\resolve_{i-1}) $. Next, Player~$ I $ has to pick a
function~$ r_{i}  \in \curlyR$ such that 
\begin{equation} \label{constraint2}\tag{C2}
  \dom(r_{i}) = r_{i-1}(q_{i-1}) .
\end{equation}
Afterwards, Player~$ O $ picks some state~$ q_{i} \in \dom(r_{i}) $. 

Both
players can always move: Player~$ I $ can move, as $ r_{i-1}(q_{i-1}) $ is
always non-empty (Lemma~\ref{lemma_graphautprops}.\ref{lemma_graphautprops_witnonempty}) and thus the domain of some $\resolve \in \curlyR$ (Lemma~\ref{lemma_graphautprops}.\ref{lemma_graphautprops_witcomplete}), Player~$ O $ can move, as the domain of every~$ r \in
\curlyR $ is non-empty by construction. The resulting play of $
\game(\aut) $ is the sequence~$ r_{0} q_{0} r_{1} q_{1} r_{2} q_{2}
\cdots $, which is won by Player~$O$ if the maximal color occurring infinitely
often in $ \col_{ \autc }(q_{0})\col_{ \autc }(q_{1})\col_{ \autc }(q_{2}) \cdots$ is
even. Otherwise, Player~$ I $ wins.

A strategy for Player~$ I $ is a function~$ \stratI' $ mapping the
empty play prefix to a function~$ r_{0} $ satisfying
(\ref{constraint1}) and mapping a non-empty prefix~$ r_{0} q_{0}
\cdots r_{i-1} q_{i-1} $ to a function~$ \resolve_{i} $ satisfying
(\ref{constraint2}). A strategy~$\stratO'$ for Player~$ O $ maps a play prefix~$
r_{0} q_{0} \cdots r_{i} $ to a state~$ q_{i} \in
\dom(r_{i}) $. A play $ r_{0} q_{0} r_{1} q_{1} r_{2} q_{2} \cdots $
is consistent with $ \stratI' $, if $ r_{i} = \stratI'(r_{0} q_{0}
\cdots r_{i-1} q_{i-1}) $ for every $ i \in \nats $ and it is
consistent with $ \stratO' $, if $ q_{i} = \stratO'(r_{0} q_{0} \cdots
r_i) $ for every $ i \in \nats $. A strategy $ \strat' $ for Player~$\p\in \set{I,O}$ is winning, if every play that is consistent with $\strat'$ is winning for Player~$\p$. As usual, we say that a player wins $ \game(\aut) $, if she has a winning strategy.

 First, we show that the original delay game and the abstract game~$\game(\aut)$ have the same winner. To this end, we fix a winning strategy for Player~$O$ in the delay game and simulate it in $\game(\aut)$, and vice versa, by translating real moves of Player~$I$ into abstract moves from $\curlyR$, and vice versa.

\begin{lem} \label{lem_splitgamecorrectness} Player~$ O $ wins $
  \delaygame{\lpar{\aut}} $ for some delay function $ f $ if and only
  if Player~$ O $ wins $ \game(\aut) $.
\end{lem}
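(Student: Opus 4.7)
The plan is to prove each implication by translating a winning strategy in the source game into one for the target game, via the correspondence that a round of $\game(\aut)$ abstracts a block of the delay play: Player~$I$'s move~$r_i$ summarises the behaviour that the block's input induces on~$\autc$, while~$q_i$ records the state a fresh run of~$\autc$ reaches after the block. By Remark~\ref{remark_autproduct}, $\col_\autc(q_i)$ equals the maximum color visited by~$\aut$ within the $i$-th block, so the maximum color occurring infinitely often in the full run of~$\aut$ on the delay play coincides with $\limsup_i \col_\autc(q_i)$; this transfers the parity condition between the two games.

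For~($\Leftarrow$) I would fix $L = 2^{n^2}$ and the delay function~$f$ with $f(0) = 2L$ and $f(i) = 1$ for $i > 0$, so that before Player~$O$ must produce any response inside block~$k$ she has already fully observed block~$k + 1$. Using~$\stratO'$, she maintains a virtual $\game(\aut)$-play: each length-$L$ input block~$u^{(k)}$ determines, via Lemma~\ref{lemma_graphautprops}.\ref{lemma_graphautprops_witcomplete} and~\ref{lemma_graphautprops_witdisjoint}, the unique $r_k \in \curlyR$ with the appropriate domain and $u^{(k)} \in \wit{r_k}$; she queries $\stratO'$ for $q_{k+1} \in r_k(q_k)$; and she reads her responses for block~$k$ off a witnessing path of~$\autc$ from $q_k$ (restarted with its first coordinate's color) to~$q_{k+1}$, whose existence is guaranteed by the definition of~$r_k$.

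For~($\Rightarrow$) I would assume $\stratO$ wins $\delaygame{\lpar{\aut}}$ for the given~$f$ and define~$\stratO'$ by simulating a virtual delay play. When Player~$I$ plays~$r_i$, pick a canonical witness $w_i \in \wit{r_i}$ of length large enough (depending on~$f$ and the previous choices) that $\stratO$ has already fixed the outputs corresponding to blocks $0, \ldots, i-1$ by the time $\stratO'$ must answer; this is possible because $r_i \in \curlyR$ makes $\wit{r_i}$ infinite. Feeding the concatenation $\alpha = w_0 w_1 \cdots$ to~$\stratO$ yields~$\beta$, and $q_i$ is defined as the state reached by~$\autc$ on block~$w_{i-1}$ paired with the matching portion of~$\beta$, started afresh from~$q_{i-1}$ (with its color reset). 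Then $q_i \in r_{i-1}(q_{i-1})$ by construction, so~(\ref{constraint2}) is respected.

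The main obstacle is synchronising the coarse block structure of $\game(\aut)$ with the round-by-round structure of the delay game. For~($\Leftarrow$) this is done by tailoring~$f$ to the size bound~$2^{n^2}$ from Lemma~\ref{lemma_graphautprops}, so that each length-$L$ input block uniquely identifies an element of~$\curlyR$. For~($\Rightarrow$) we cannot tune~$f$ and must instead exploit that each $\wit{r}$ with $r \in \curlyR$ contains arbitrarily long words, so that witnesses can be stretched to buffer whatever lookahead~$f$ demands. Once the block boundaries are aligned in either direction, the color-tracking equivalence from the first paragraph transfers the parity winning condition and the proof concludes.
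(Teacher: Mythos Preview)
Your proposal is correct and follows essentially the same approach as the paper: both directions simulate one game inside the other, with Player~$I$ kept one abstract move ahead, and the parity equivalence is transferred via the color-tracking of~$\autc$ exactly as you describe using Remark~\ref{remark_autproduct}. The one noteworthy difference is in~($\Rightarrow$): the paper first invokes the Holtmann--Kaiser--Thomas result to assume~$f$ is constant, which lets it pick witnesses with $f(0) \le |w_0| \le |w_1| \le \cdots$, whereas you handle arbitrary~$f$ directly by choosing each~$|w_i|$ large enough relative to the partial sums of~$f$ and the already-committed witnesses. The paper explicitly notes in a footnote that its simplifying assumption can be avoided at the cost of more bookkeeping---which is precisely what your version does. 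For~($\Leftarrow$) your block length $L = 2^{n^2}$, the choice $f(0) = 2L$, and the use of Lemma~\ref{lemma_graphautprops}.\ref{lemma_graphautprops_witcomplete} and~\ref{lemma_graphautprops_witdisjoint} to extract the unique~$r_k$ from each input block match the paper's construction exactly.
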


\begin{proof}
For the sake of readability, we denote $\game(\aut)$ by $\game$ and $\delaygame{\lpar{\aut}}$ by $\Gamma$, whenever $f$ is clear from the context.
 
\myquot{$ \Rightarrow $}: Let $ \stratO $ be a winning strategy for
  Player~$ O $ in $ \delaygame{\lpar{\aut}} $ for some $f$, which we can
  assume\footnote{This assumption simplifies the proof, but can be
    avoided at the expense of introducing intricate notation.} to be a constant~\cite{HoltmannKaiserThomas12}. We construct a
  winning strategy~$ \stratO' $ for Player~$ O $ in $ \game $ by simulating a play of $\game$ by a play of $\Gamma$.

  Let $ r_{0} $ be the first move of Player~$ I $ in $ \game $,
  which has to be answered by Player~$ O $ by picking $
  \stratO'(r_{0}) = q_{I}^{ \autc } $, and let $ r_{1} $ be Player~$ I $'s
  response. As $\wit{\resolve_0}$ and $\wit{\resolve_1}$ are infinite by definition, we
  can choose witnesses $ w_{0} \in \wit{\resolve_0} $ and $w_1 \in \wit{\resolve_1}$ such that $f(0) \le \size{w_{0}} \le \size{w_{1}} $. We simulate this play prefix in $\Gamma$: Player~$I$ picks $ w_{0} w_{1} =
  \widx{\alpha}{0} \cdots \widx{\alpha}{\ell_{1} - 1} $ in his first
  moves, which is enough to play at least $\size{w_0}$ rounds: $w_0$ is long enough to play the first round and $w_1$ is even long enough for at least the next $\size{w_0}$ rounds. 
  
Now, let $ \widx{\beta}{0} \cdots \widx{\beta}{\ell_{1} - f(0)}
  $ be the response of Player~$ O $ according to $ \stratO $. As we have played at least $\size{w_0}$ rounds, we obtain $ \size{\widx{\beta}{0} \cdots
    \widx{\beta}{\ell_{1} - f(0)}} \geq \size{w_{0}} $. Thus, we are
  in the following situation for $ i = 1$:

  \begin{itemize}
    
  \item in $ \game $, we have constructed a play prefix~$ r_{0}
    q_{0} \cdots r_{i-1} q_{i-1} r_{i} $,  

  \item in $ \Gamma $, Player~$ I $ has picked $ w_{0}
    \cdots w_{i} = \widx{\alpha}{0} \cdots
    \widx{\alpha}{\ell_{i} - 1} $ and Player~$O$ has picked $
    \widx{\beta}{0} \cdots \widx{\beta}{\ell_{i} - f(0)} $ according
    to $\stratO$. The moves of the players satisfy $ \size{\widx{\beta}{0} \cdots
      \widx{\beta}{\ell_{i} - f(0)}} \geq \size{w_{0} \cdots w_{i-1}}
    $.
	
	\item Finally, $ w_{j} $ is a witness for $ r_{j}
    $ for every $ j \leq i $. 

  \end{itemize}
  
  Now, let $i \geq 1$ be arbitrary and let $q_{i-1} = (q_{i-1}', c_{i-1})$. We define $ q_{i} $ to be the
  state of $ \autc $ that is reached from $ (q_{i-1}', \col(q_{i-1}')) $ after
  processing $ w_{i-1} $ and the corresponding moves of Player~$ O $,
  i.e.,
  \begin{equation*}
    { \widx{\alpha}{\size{w_{0} \cdots w_{i-2}}} \choose 
        \widx{\beta}{\size{w_{0} \cdots w_{i-2}}} } 
      \cdots {\widx{\alpha}{\size{w_{0} \cdots w_{i-1}}-1} 
        \choose \widx{\beta}{\size{w_{0} \cdots
            w_{i-1}}-1}}.
  \end{equation*}
  By definition of $ r_{i-1}$, we have $ q_{i} \in
  \resolve_{i-1}(q_{i-1})$. Correspondingly, we can define \[
  \stratO'(r_{0} q_{0} \cdots r_{i-1} q_{i-1} r_{i}) = q_{i} .\] Now,
  let $ r_{i+1} $ be the next move of Player~$ I $ in $ \game $
  and let $ w_{i+1} \in \wit{r_{i+1}} $ be a witness with $ |w_{i+1} |
  \geq \size{w_i} $. In $ \Gamma $, let Player~$ I $ pick $
  w_{i+1} = \widx{\alpha}{\ell_{i}} \cdots \widx{\alpha}{\ell_{i+1} -
    1} $ as his next moves and let Player~$ O $ respond by $
  \widx{\beta}{\ell_{i} - f(0) + 1} \cdots \widx{\beta}{\ell_{i+1} -
    f(0) } $ according to $ \stratO $. Thus, we are again in the
  aforementioned situation for $ i + 1 $, which concludes the
  definition of $ \stratO' $. 

  It remains to show that $ \stratO' $ is winning. Consider a play~$ r_{0} q_{0} r_{1}
  q_{1} r_{2} q_{2} \cdots $ that is consistent with $ \stratO' $ and
  let $ w = { \widx{\alpha}{0} \choose \widx{\beta}{0} }{
    \widx{\alpha}{1} \choose \widx{\beta}{1} }{ \widx{\alpha}{2}
    \choose \widx{\beta}{2} } \cdots $ be the outcome in $
  \Gamma $ constructed during the simulation as defined above. Note that $\widx{\alpha}{0} \widx{\alpha}{1} \widx{\alpha}{2} \cdots$ is equal to $w_0 w_1 w_2 \cdots$, where each $w_i$ is a witness of $\resolve_i$. Furthermore,  let $q_i = (q_i', c_i)$. 
  
  A straightforward inductive application of Remark~\ref{remark_autproduct} shows that $q_{i+1}'$ is the state that $\aut$ reaches after processing $w_{i}$ and the corresponding moves of Player~$O$ starting in $q_{i}'$ and that $c_{i+1}$ is the maximal color seen on the run.
Thus, the
  maximal color visited infinitely often by $ \aut $ after processing $ w $ is
  the same as the maximal color of the sequence $
  c_{0}c_{1}c_{2}\cdots $, which is even, as $ w $ is consistent with a winning strategy and therefore accepted by $\aut$. Hence, $r_{0} q_{0} r_{1}
  q_{1} r_{2} q_{2} \cdots$ is winning for Player~$O$ and $\stratO'$ is a winning strategy.

  \myquot{$ \Leftarrow $}: Let $
  \stratO' $ be a wining strategy for Player~$ O $ in $ \game
  $. We construct a winning strategy~$\stratO$ for her in $\delaygame{\lpar{\aut}}$ for the constant delay function~$f$ with $f(0) = 2d$, where $ d = 2^{n^{2}} $. The strategy~$\stratO$ is again constructed by simulating a play of $\Gamma$ by a play of $\game$.
  
   In the following, both players pick their moves in $\Gamma$ in blocks of length~$d$. We denote Player~$I$'s blocks by $\block{a_i}$ and Player~$O$'s blocks by $\block{b_i}$, i.e.,  every $\block{a_i}$ is in $\SigmaI^d$ and every $\block{b_i}$ is in $\SigmaO^d$. 
   
  Let $ \block{a_{0}}\block{a_{1}} $ be the first move of Player~$ I $ in $
  \Gamma $, define $q_0 = q_I^{ \autc }$, and let $
  r_{0} = \resolve_{\block{a_{0}}}^{\set{q_{0}}} $ and $ r_{1} =
  \resolve^{r_{0}(q_{0})}_{\block{a_{1}}} $. Then, $ r_{0} q_{0} r_{1} $ is a play prefix in $ \game $ that
  is consistent with $ \stratO' $. Thus, we are in the following
  situation for $ i = 1 $:

  \begin{itemize}
    
    \item in $ \Gamma $, Player~$ I $ has picked blocks~$
      \block{a_{0}} \cdots \block{a_{i}} $ and Player~$ O $ has picked $
      \block{b_{0}} \cdots \block{b_{i-2}} $,
	
  \item in $ \game $ we have constructed a play prefix~$ r_{0}
    q_{0} \cdots r_{i-1} q_{i-1} r_{i} $ that is consistent with $
    \stratO' $, and
	
	\item $ \block{a_{j}} $ is a witness for $
      r_{j} $ for every $ j \leq i $.

  \end{itemize}

  Now, let $ i \geq 1 $ be arbitrary and $ q_i =
  \stratO'(r_{0} q_{0} \cdots r_{i-1} q_{i-1} r_i) $. The rules of $\game$ imply $ q_{i} \in
  \dom(r_{i}) = r_{i-1}(q_{i-1}) $. Furthermore, as $ \block{a_{i-1}} $ is a witness for
  $ r_{i-1} $, there is some $ \block{b_{i-1}} $ such that the automaton $ \autc $ reaches $ q_{i} $ after
  processing $ { \block{a_{i-1}} \choose \block{b_{i-1}} }$ from $
  (q_{i-1}',\col(q_{i-1}')) $, where $ q_{i-1} =
  (q_{i-1}', c_{i-1}) $. Player~$ O $'s strategy for $
  \Gamma $ is to pick the letters of $ \block{b_{i-1}} $
  in the next $ d $ rounds. These are answered by Player~$ I $ by $ d $
  letters forming $ \block{a_{i+1}} $. This way, we obtain $ \resolve_{i+1} =
  \resolve^{\resolve_{i}(q_{i})}_{\block{a_{i+1}}} $ bringing us back to the
  aforementioned situation for $ i + 1 $, which concludes the definition of $\stratO$.

  It remains to show that $ \stratO $ is winning for Player~$ O $. Let
  $ w = {\block{a_{0}} \choose \block{b_{0}}}{\block{a_{1}} \choose \block{b_{1}}}{\block{a_{2}} \choose
    \block{b_{2}}} \cdots $ be the outcome of a play of $ \Gamma $
  consistent with $ \stratO $. Also, let $ r_{0} q_{0} r_{1} q_{1} r_{2} q_{2}
  \cdots $ be the corresponding play of $ \game $ constructed as described in the simulation above, where each $\block{a_i}$ is a witness for $\resolve_i$. Finally, let  $q_i = (q_i', c_i)$. 
  
  A straightforward inductive application of Remark~\ref{remark_autproduct} shows that $ q_{i+1}' $ is the state
  reached in $ \aut $ after processing $ {\block{a_{i}} \choose \block{b_{i}}} $ 
  starting in $ q_{i}' $ and that $ c_{i+1} $ is the largest color seen
  on this run.  As $ r_{0} q_{0} r_{1} q_{1}
  r_{2} q_{2} \cdots $ is consistent with $ \stratO' $, the sequence~$ \col_\autc(q_0) \col_\autc(q_1) \col_\autc(q_2) \cdots = 
  c_{0}c_{1}c_{2}\cdots $ satisfies the parity condition, i.e., the
  maximal color occurring infinitely often is even. Thus, the maximal
  color occurring infinitely often during the run of $\aut $ on $w$ is even as well, i.e., $w$ is winning for Player~$O$. Thus, $ \stratO $ is a winning strategy for Player~$ O $.
\end{proof}

Now, we can prove the main theorem of this section: determining whether Player~$O$ wins a delay game induced by a given parity automaton $\aut$ for some $f$ is in $\exptime$. To this end, it suffices to model $\game(\aut)$ as a classical parity game and to show that it can be constructed and solved in exponential time. This is sufficient due to Lemma~\ref{lem_splitgamecorrectness}.

\begin{proof}[Proof of Theorem~\ref{thm_paritycomplexity_ub}]  
  First, we argue that $\curlyR$ can be constructed in exponential time: to this end, one constructs for every partial function~$\resolve$ from $Q_{ \autc }$ to $2^{Q_{ \autc }}$ the automaton of Lemma~\ref{lemma_graphautprops}.\ref{lemma_graphautprops_aut} recognizing~$\wit{\resolve}$ and tests it for recognizing an infinite language. There are exponentially many functions and each automaton is of exponential size, which yields the desired result.
  
  Now, we can encode $ \game(\aut) $ as a
  graph-based parity game\footnote{For a complete definition see,
  e.g., \cite{GraedelThomasWilke02}}~$ ((V, V_{I}, V_{O}, E), \col') $ with $
  \col' \colon V \rightarrow\nats $ where

  \begin{itemize}
  
  \item $ V = V_{I} \cup V_{O} $,
  
  \item $V_{I} =
    \set{\init} \cup \curlyR \times Q_{ \autc } $ for some fresh initial vertex~$\init$,
    \item $V_{O} = \curlyR $,
	
	\item $E$ is the union of the following sets of edges:
	\begin{itemize}
		
		\item $\set{(\init, \resolve) \mid \dom(\resolve) = \set{q_I^{ \autc }}}$: the initial moves of Player~$I$.
		
		\item $\set{((\resolve, q), \resolve') \mid \dom(\resolve') = \resolve(q)}$: (regular) moves of Player~$I$.
		
		\item $\set{(\resolve,(\resolve, {q})) \mid q \in \dom(\resolve)}$: moves of Player~$O$, and
\end{itemize}	
	\item $\col'(v)  = \begin{cases}
	c &\text{if } v = (r,(q,c)) \in \curlyR \times Q_{ \autc },\\
	0 &\text{otherwise.} 
	
\end{cases}$       
  \end{itemize}

  Then, Player~$ O $ wins $ \game(\aut) $ if and only if she
  has a winning strategy from $ \init $ in the explicit parity game. 
  
  A parity game with $ n $ vertices, $ m $ edges, and $ k $ colors can
  be solved in time~$ \bigo(m  n^{\frac{k}{3}}) $~\cite{Schewe07}. The parity game constructed above has at most $ \bigo(2^{\size{\aut}^4} \cdot \size{\aut}^2 ) $ vertices and at most $\size{\aut}$ colors. Hence,
  it can be solved in exponential time in the size of
  $\aut$.
\end{proof}

By applying
both directions of the equivalence between $ \delaygame{\lpar{\aut}} $
and $ \game(\aut) $, we obtain an exponential upper bound of the form~$2^{(\size{\aut} k)^2+1}$ on the lookahead necessary
for Player~$ O $ to win a delay game with a parity condition over $k$ colors. However, this upper bound on the lookahead requires the introduction of the game~$\game(\aut)$. To conclude this section, we present an alternative proof yielding a slightly better upper bound on the necessary lookahead for Player~$O$ in case she wins. The proof relies on a direct pumping argument for winning strategies for Player~$I$: winning for the constant delay function~$f$ with $ f(0) \leq  2^{\size{\aut}^2 k +1}$ allows him to cover arbitrarily larger lookahead\footnote{The previous version of this paper claimed a bound of $ 2^{\size{\aut} k +1}$, but the proof was incomplete. Here, we present a corrected proof for the bound~$2^{\size{\aut}^2 k +1}$.}. 

\begin{thm} \label{thm_upperbound_direct} Let $ L = \lpar{\aut} $
  where $ \aut $ is a deterministic parity automaton with $ k $
  colors. The following are equivalent:

  \begin{enumerate}

  \item \label{thm_upperbound_winsspecial} Player~$ O $ wins $
    \delaygame{L} $ for some constant delay function $ f $ with $ f(0) \leq  2^{\size{\aut}^2 k +1}$.

  \item \label{thm_upperbound_winsany} Player~$ O $ wins $
    \delaygamep{L} $ for some delay function $ f' $.

  \end{enumerate}
\end{thm}

\begin{proof}
We only consider the non-trivial implication~(\ref{thm_upperbound_winsany})~$\Rightarrow$~(\ref{thm_upperbound_winsspecial}) and prove the contrapositive by turning a winning strategy for Player~$ I $ in
  $ \delaygame{L} $ into a winning strategy for games with
  arbitrarily larger lookahead. We use a pumping argument to show that every sufficiently large lookahead contains
  some repetitive behavior, which we pump to cover the larger lookahead. To this end, we construct a
  simulation of the original game to inductively define
  the new strategy for Player~$ I $. The simulation proceeds in big
  steps, i.e., the players pick sufficiently large blocks of input and output letters,
  respectively, which allows us to identify the repetitive behaviour
  in each such block. The structure of the simulation is similar to the game~$\game(\aut)$ from the proof
  of Theorem~\ref{thm_paritycomplexity_ub}, where Player~$ I $ is always two functions ahead, after he moved, while Player~$ O $
  responds only to the first of these two functions. In this alternative proof, we do
  not need an abstract representation of  the
  lookahead, which is why we represent it just by concrete sequences of input letters. 
  
   As in the previous proof, simulating the game in big steps requires us to extend the automaton recognizing
  $ L $ to accumulate the maximal color along a run.  Let $ \aut = (Q,\SigmaI \times \SigmaO, q_{I},
  \delta, \col) $ be given and let
  $ \autc = (Q_{ \autc },\SigmaI \times \SigmaO, q_{I}^{ \autc },
  \delta_{ \autc }, \col_{ \autc }) $
  be the color-tracking extension of $ \aut $ as defined before. 
    
Now, let $\stratI$ be a winning strategy for Player~$I$ for $\delaygame{L}$, let $f'$ be an arbitrary delay function. Due to the results of Holtmann et al.~\cite{HoltmannKaiserThomas12}, we can assume w.l.o.g.\ that $f'$ is constant, which simplifies the proof, but is not essential. We define $d = 2^{\size{\aut}^2\cdot k}$, i.e., we have $f(0) = 2d$. We construct a winning strategy~$\stratI'$ for Player~$I$ in $\delaygamep{L}$ by simulating a play of $\delaygamep{L}$ via a play of $\delaygame{L}$.
In the following, we will group the moves of the players in $\delaygame{L}$ into blocks of length~$d$. We denote Player~$I$'s blocks by $\block{a_i}$ and Player~$O$'s blocks by $\block{b_i}$. For the sake of readability, we denote $\delaygame{L}$ by $\Gamma$ and $\delaygamep{L}$ by $\Gamma'$.

To begin, we have to give the first moves of Player~$I$ in $\Gamma'$.
To this end, consider the first move $ \stratI(\epsilon) = \block{a_0} \block{a_1}$ of Player~$I$ in  $\Gamma$. By the choice of $d$ and the fact that $\aut$ is complete, we can decompose $ \block{a_{0}} $ into $x_{0}y_{0}z_{0} $ with non-empty $y_0$ such that for every $q \in Q$
\[
\delta_{\autp}^*(\set{(q, \col(q))}, x_0) = \delta_{\autp}^*(\set{(q, \col(q))}, x_0 y_0).
\] Similarly, we can decompose $ \block{a_{1}} $ into $x_{1}y_{1}z_{1} $ with non-empty $y_1$ such that for every $q \in Q$ 
\[
\delta_{\autp}^*(\set{(q, \col(q))}, x_1) = \delta_{\autp}^*(\set{(q, \col(q))}, x_1 y_1).
\]
Now, we pump $ y_{0}
  $ sufficiently often to simulate the initial block in $\Gamma'$, i.e., let $\widx{\alpha}{0} \cdots \widx{\alpha}{\ell_0-1} = x_{0}(y_{0})^{h_{0}}z_{0} $ for the smallest $ h_{0} \ge 1$ such that $ \size{x_{0}(y_{0})^{h_{0}}z_{0}} \geq f'(0) $. Similarly, we define  $\widx{\alpha}{\ell_0} \cdots \widx{\alpha}{\ell_1-1} = x_{1}(y_{1})^{h_{1}}z_{1} $ for the smallest $ h_{1} \ge 1 $ with $ \size{x_{1}(y_{1})^{h_{1}}z_{1}} \geq \size{x_{0}(y_{0})^{h_{0}}z_{0}} $.

Now, we define $\stratI'$ such that Player~$I$ picks $\widx{\alpha}{0} \cdots \widx{\alpha}{\ell_1-1} $ with his first moves in $\Gamma'$, which is answered by Player~$O$ by picking $\widx{\beta}{0} \cdots \widx{\beta}{\ell_1 -f'(0)}$. By the choice of $h_1$, we obtain $(\ell_1 -f'(0)) +1 \geq \ell_0$. 
  
Now, we are in the following situation for $ i = 1 $:

  \begin{itemize}
    
    \item In $ \Gamma', $ Player~$ I $ has picked $
      \widx{\alpha}{0} \cdots \widx{\alpha}{\ell_i-1}
      $ such that for every $j \le i$: $\widx{\alpha}{\ell_{j-1}} \cdots \widx{\alpha}{\ell_{j}-1} = x_{j}(y_{j})^{h_{j}}z_{j} $.

	  \item In $ \Gamma'$, Player~$ O $ has picked $
      \widx{\beta}{0} \cdots \widx{\beta}{\ell_i-f'(0)}
      $ such that $(\ell_i -f'(0)) +1 \geq \ell_{i-1}$. Thus, Player~$O$ has  provided an answer to $\widx{\alpha}{\ell_{i-2}} \cdots \widx{\alpha}{\ell_{i-1}-1} = x_{i-1} (y_{i-1})^{h_{i-1}}z_{i-1}$.

  \item In $ \Gamma $, Player~$ I $ has picked blocks~$ \block{a_0} \cdots 	
  	\block{a_i}$ and Player~$ O $ has picked $ \block{b_0} \cdots \block{b_{i-2}} $.

  \end{itemize}
  
Now, let $i \geq 1$ be arbitrary and let $q_{i-1}$ be the state reached by $\aut$ after processing ${ \block{a_0} \choose \block{b_0} } \cdots { \block{a_{i-2}} \choose \block{b_{i-2}} }$. Further, 
 let $(q_{i-1}^*,c_{i-1}^*)$ be the state reached by $\autc$ after processing $x_{i-1}(y_{i-1})^{h_{i-1}-1}$ and the corresponding letters picked by Player~$O$ at these positions starting in $(q_{i-1},\col(q_{i-1}))$. Due to the state repetition induced by the decomposition, which is independent of the starting state~$(q,\col(q))$, we can find a word~$x_{i-1}' \in \SigmaO^{\size{x_{i-1}}}$ such that $\autc$ reaches the same state~$(q_{i-1}^*,c_{i-1}^*)$ after processing ${x_{i-1} \choose x_{i-1}'}$, when starting in $(q_{i-1}, \col(q_{i-1}))$. 
We define $\block{b_{i-1}} = x_{i-1}' y_{i-1}' z_{i-1}'$, where $y_{i-1}'$ and $z_{i-1}'$ are the letters picked by Player~$O$ at the positions of the last repetition of $y_{i-1}$ and at the positions of $z_{i-1}$, respectively. 

We continue the simulation in $\Gamma$ by letting Player~$O$ pick the block $\block{b_{i-1}}$ during the next $d$ rounds, which is answered by Player~$I$ by picking the next block $\block{a_{i+1}}$. Again, we can decompose $ \block{a_{i+1}} $ into $x_{i+1}y_{i+1}z_{i+1} $ with non-empty $y_{i+1}$ such that for every $q \in Q$
 \[
\delta_{\autp}^*(\set{(q, \col(q))}, x_1) = \delta_{\autp}^*(\set{(q, \col(q))}, x_1 y_1).
\]
As before, we define  
   $\widx{\alpha}{\ell_i} \cdots \widx{\alpha}{\ell_{i+1}-1} = x_{i+1}(y_{i+1})^{h_{i+1}}z_{i+1} $ for the smallest $ h_{i+1} \ge 1 $ such that $ \size{x_{i+1}(y_{i+1})^{h_{i+1}}z_{i+1}}
   \geq  \size{x_{i}(y_{i})^{h_{i}}z_{i}} $.
  Now, we define $\stratI'$ such that Player~$I$ picks $\widx{\alpha}{\ell_i} \cdots \widx{\alpha}{\ell_{i+1}-1}$ with his next moves in $\Gamma'$, which is answered by Player~$O$ by picking $\widx{\beta}{(\ell_i-f'(0))+1} \cdots \widx{\beta}{\ell_{i+1} -f'(0)}$. By the choice of $h_{i+1}$, we obtain $(\ell_{i+1} -f'(0)) +1 \geq \ell_i$. 
  Thus, we are in the situation described above for $i+1$, which completes the definition of $\stratI'$.
    
  It remains to show that $\stratI'$ is winning. Let $w' = 
  { \widx{\alpha}{0} \choose \widx{\beta}{0} }
  { \widx{\alpha}{1} \choose \widx{\beta}{1} }
  { \widx{\alpha}{2} \choose \widx{\beta}{2} } \cdots $ be an outcome of a play that is consistent with $\stratI'$ in $\Gamma'$ and let 
  $ w =  {\block{a_0} \choose \block{b_0} }
  {\block{a_1} \choose \block{b_1} }
  {\block{a_2} \choose \block{b_2} } \cdots $ be the simulating play of $\Gamma$ as described above. Fix $(q_0,c_0) = (q_I,\col(q_I))$ and let $(q_{i}, c_{i})$ for $i \ge 1$ be the state reached by $\autc$ when processing  
\[
{ \widx{\alpha}{\ell_{i-2}} \choose \widx{\beta}{\ell_{i-2}} } \cdots
{\widx{\alpha}{\ell_{i-1}-1} \choose \widx{\beta}{\ell_{i-1}-1} },
\]  
starting in $(q_{i-1}, \col(q_{i-1}))$, using $\ell_{-1} = 0$. By construction, $(q_{i}, c_{i})$ is also the state reached by $\autc$ when processing 
$  {\block{a_{i-1}} \choose \block{b_{i-1}} }$
starting in $(q_{i-1}, \col(q_{i-1}))$. Thus, applying Remark~\ref{remark_autproduct} inductively shows that $\aut$ accepts $w$ if, and only if, it accepts $w'$. Since $w$ is consistent with the winning strategy~$\stratI$ both are not accepted, i.e., $\stratI'$ is indeed a winning strategy.
\end{proof}

To conclude, let us briefly discuss the gap in the argumentation of the original proof. There, when decomposing the second block~$\block{a_1}$, we only asked for a state repetition from the state of the powerset automaton reached after processing the first block~$\block{a_0}$. However, this does not guarantee that the state~$(q_{1}^*, c_{1}^*)$ as defined above is reachable from $(q_1,\col(q_1))$, the state reached after processing ${\block{a_0} \choose \block{b_0}}$ from $(q_I,\col(q_I))$. One only obtains that there is a block $\block{b_0'}$ such that $(q_{1}^*, c_{1}^*)$ is reachable from the state reached after processing ${\block{a_0} \choose \block{b_0'}}$ from $(q_I,\col(q_I))$. Hence, if $\block{b_0'} \neq \block{b_0}$, then the inductive argument breaks down. Having a state repetition for every possible such starting state resolves this issue.


\section{Winning Conditions that are Reachability and Safety}
\label{sec_safetycapreach}
Recall the exponential lower bounds on the necessary lookahead for reachability and safety conditions presented in Section~\ref{sec_lowerbounds}. Both rely on the same construction and one can even turn the deterministic safety automata~$\aut_n'$ exhibiting the lower bound into reachability automata exhibiting the same lower bound. To this end, one changes the set of accepting states, but leaves the transition structure of $\aut_n'$ unchanged. 
Nevertheless, these automata do not accept the same language. In the following, we prove that this is unavoidable: linear lookahead is necessary and sufficient for winning conditions that are both reachability and safety. Furthermore, solving such games is $\conp$-complete and thus simpler than solving delay games with general reachability or safety conditions (under standard complexity-theoretic assumptions).

Winning conditions that are both reachability and safety conditions are also known as $\omega$-regular clopen conditions, as they are those $\omega$-regular languages that are both closed and open in the Cantor topology. Note that the way we represent such conditions is a non-trivial issue that has influences on our results: we could specify an $\omega$-regular clopen language either by a reachability or by a safety automaton or even give a pair of equivalent automata, one for each acceptance condition. However, we take another approach. A clopen language is fully characterized by a finite language of finite words. Thus, we use acyclic finite automata on finite words to represent $\omega$-regular clopen winning conditions. For every deterministic reachability and every deterministic safety automaton recognizing a clopen language there is an acyclic finite automaton of the same size that represents the same language, i.e., the change in representation does not incur a blowup, when starting with deterministic automata. The results mentioned above pertain to the representation of winning conditions via acyclic finite automata. 

First, we formalize the characterization of $\omega$-regular clopen languages by acyclic finite automata. In the following, we only consider (w.l.o.g.)~automata whose states are all reachable from the initial state. Also, recall that we require all our automata to be complete. A state of an automaton is productive, if an accepting state is reachable from it, otherwise it is non-productive. In particular, every non-productive state is non-accepting. Finite languages of finite words are recognized by acyclic finite automata, i.e., automata that satisfy the following property: if $q$ is on a cycle, then $q$ is non-productive. The depth of an acyclic automaton~$\aut$ is the length of the longest path from an initial to an accepting state, which is bounded by $\size{\aut}-1$.

\begin{rem}\label{rem_acyclic}
Let $\aut$ be a non-deterministic automaton. Then, $\lstar{\aut}$ is finite if and only if $\aut$ is acyclic.
\end{rem}

By merging all non-productive states of a non-deterministic acyclic automaton into a single non-accepting sink, we obtain an equivalent acyclic automaton of the same depth where processing a word that is longer than the depth leads to the non-accepting sink.

\begin{rem}
Let $\aut$ be a non-deterministic acyclic automaton of depth~$k$. Then, there is a non-deterministic acyclic automaton~$\aut'$ of depth~$k$ with $\size{\aut'} \le \size{\aut}$, $\lstar{\aut} = \lstar{\aut'}$, and such that processing a word of length greater than $k$ leads $\aut'$ to a (non-accepting) sink.	
\end{rem}

Now, we are able to state the characterization theorem for $\omega$-regular clopen languages. 

\begin{lem}[\cite{KupfermanVardi01}]
\label{lemma_reachabilitycapsafety_charac}
Let $L \subseteq \Sigma^\omega$ be an $\omega$-regular clopen language that is recognized by a deterministic automaton~$\aut$ with reachability or safety acceptance.  
 Then, $ L = \lstar{\aut'} \cdot \Sigma^{\omega} $ for some deterministic acyclic automaton~$\aut'$ with $\size{\aut'} \le \size{\aut}$.
\end{lem}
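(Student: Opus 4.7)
The plan is to exploit the topological characterisation of clopen sets in $\Sigma^\omega$: every $\alpha \in \Sigma^\omega$ has some finite prefix $u$ which already determines whether $\alpha \in L$, i.e., either $u\Sigma^\omega \subseteq L$ or $u\Sigma^\omega \cap L = \emptyset$. I would then classify the states of $\aut$ by the language they induce, observe that the classification yields an acyclic structure on the ``undetermined'' part, and construct $\aut'$ on top of it.

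First, I would assume w.l.o.g.\ that $\aut$ is a deterministic reachability automaton whose accepting states are sinks; the safety case is symmetric. For each state $q$, let $L_q \subseteq \Sigma^\omega$ be the language recognised from $q$. Call $q$ \emph{positive} if $L_q = \Sigma^\omega$ (in the reachability case, the accepting sinks) and \emph{negative} if $L_q = \emptyset$ (the non-productive states, which behave as rejecting sinks up to equivalence); all remaining states are \emph{undetermined}.

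The crucial step is to show that the undetermined states carry no cycle in $\aut$. A cycle through undetermined states would produce an infinite word $\alpha$ whose run in $\aut$ remains in undetermined states forever; but then no finite prefix of $\alpha$ could force membership or non-membership in $L$, contradicting clopen-ness. By finiteness of $Q$, the undetermined states therefore form a DAG, and every run of $\aut$ reaches a positive or a negative state after finitely many steps.

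From this, $\aut'$ is built by keeping the undetermined states of $\aut$ and adding a single accepting state $q_+$ and a single non-productive sink $q_-$: transitions of $\aut$ that would enter a positive (resp.\ negative) state are redirected to $q_+$ (resp.\ $q_-$), and all outgoing transitions of $q_+$ are sent to $q_-$ to avoid an accepting self-loop. Acyclicity follows immediately: cycles cannot pass through undetermined states (DAG), nor through $q_+$ (which has no outgoing edge to any productive state), and the only remaining cycle sits at the non-productive $q_-$, which is allowed. A direct trace-matching argument shows $w \in \lstar{\aut'}$ iff the run of $\aut$ on $w$ first enters a positive state at position $\size{w}$, so $\lstar{\aut'} \cdot \Sigma^\omega = L$. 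The size bound $\size{\aut'} \le \size{\aut}$ then holds whenever both a positive and a negative state occur in $\aut$, which is exactly when $L \notin \{\emptyset, \Sigma^\omega\}$; the two trivial cases are handled by constant-size automata. The main obstacle is the acyclicity argument for the undetermined fragment, which is where the clopen assumption is used in an essential way.
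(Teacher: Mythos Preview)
The paper does not prove this lemma; it is quoted from Kupferman and Vardi~\cite{KupfermanVardi01} without proof, so there is no ``paper's own proof'' to compare against. Your argument is correct and is the standard one: partition the states by their residual language into positive ($L_q=\Sigma^\omega$), negative ($L_q=\emptyset$), and undetermined, use clopen-ness to rule out cycles among the undetermined states, and collapse each determined class to a single state. The closure of the positive and of the negative class under transitions (which you use implicitly when arguing $\lstar{\aut'}\cdot\Sigma^\omega = L$) is immediate from the definition of $L_q$, and the acyclicity argument via a run that never leaves the undetermined states is exactly where the clopen hypothesis enters.

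One small caveat on the trivial cases you wave away: under the paper's conventions (automata are complete, and ``acyclic'' means every state on a cycle is non-productive), no one-state acyclic automaton~$\aut'$ can satisfy $\lstar{\aut'}\cdot\Sigma^\omega = \Sigma^\omega$, since its single state must carry a self-loop and hence be non-productive. So if $\aut$ is the one-state reachability automaton recognising $\Sigma^\omega$, the bound $\size{\aut'}\le\size{\aut}$ fails by one. This is a defect of the lemma as stated rather than of your argument, and is irrelevant for the use made of the lemma later in the paper.
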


With this characterization at hand, we determine the complexity of solving  delay games with $\omega$-regular clopen winning conditions and then present tight upper and lower bounds on the necessary lookahead.

\begin{thm}\label{thm_reachabilitycapsafetycomplexity}
The following problem is $\conp$-complete: Given an acyclic automaton~$\aut$ over $\SigmaI \times \SigmaO$, does Player~$O$ win $\delaygame{\lstar{\aut} \cdot (\SigmaI \times \SigmaO)^\omega}$ for some $f$?
\end{thm}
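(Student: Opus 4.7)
My plan is to derive $\conp$-completeness by combining Theorem~\ref{theorem_reachcomplex} for the upper bound with a reduction from DNF tautology for the lower bound.

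For the upper bound, I would first observe that $L = \lstar{\aut} \cdot (\SigmaI \times \SigmaO)^\omega$ is a reachability condition: redirecting every outgoing transition from an accepting state of $\aut$ into a self-loop yields a non-deterministic reachability automaton $\aut'$ with $\lexists{\aut'} = L$ and $\size{\aut'} \le \size{\aut}$. Applying Theorem~\ref{theorem_reachcomplex} to $\aut'$ reduces the problem to universality of $\proj{L} = \lstar{\proj{\aut}} \cdot \SigmaI^\omega$. Since $\aut$ is acyclic of depth at most $\size{\aut}-1$, so is $\proj{\aut}$, and every word in $\lstar{\proj{\aut}}$ has length at most $\size{\aut}-1$. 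Hence $\proj{L}$ is non-universal if and only if some $w \in \SigmaI^{\size{\aut}-1}$ has no prefix in $\lstar{\proj{\aut}}$. This yields a $\conp$ algorithm: a non-deterministic machine for the complement guesses such a $w$ of polynomial length and verifies the condition in polynomial time by running the subset construction of $\proj{\aut}$ incrementally along $w$, checking at each step that no reachable subset contains an accepting state.

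For the lower bound, I would reduce from DNF tautology, which is $\conp$-complete. Given a DNF formula $\phi = D_1 \lor \cdots \lor D_m$ over variables $x_1, \ldots, x_n$, set $\SigmaI = \set{0,1}$ and let $\SigmaO$ be a singleton. I would construct in polynomial time an acyclic non-deterministic automaton $\aut_\phi$ that non-deterministically selects a disjunct $D_j$ from its initial state and then reads the $n$-letter assignment along a chain of $n$ states, routing to a rejecting sink whenever the current bit contradicts the literal of $D_j$ on the relevant variable; the end of each chain is accepting. Then $\lstar{\proj{\aut_\phi}}$ consists precisely of the assignments satisfying $\phi$, and by the upper-bound characterization, Player~$O$ wins the delay game iff every $w \in \set{0,1}^n$ satisfies $\phi$, i.e., iff $\phi$ is a tautology.

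The main obstacle is the polynomial-time verifiability of the $\conp$ witness: since $\proj{\aut}$ is non-deterministic, membership of a single prefix in $\lstar{\proj{\aut}}$ is not directly visible, but the incremental subset construction along the candidate $w$ keeps at most $\size{\aut}$ reachable states at each step and thus runs in polynomial time. Acyclicity of $\aut$ is essential, as it bounds the depth of $\proj{\aut}$ linearly and keeps the counterexample word short; without it the same reduction would only place the problem in $\pspace$, mirroring the bound for general reachability conditions.
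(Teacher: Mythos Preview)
Your upper bound is essentially the paper's: both invoke Theorem~\ref{theorem_reachcomplex} to reduce to universality of $\proj{L}$, then use acyclicity to bound the depth by $\size{\aut}-1$, so that non-universality is witnessed by a polynomial-length word with no accepted prefix. The paper phrases the resulting check as a polynomial-time universal Turing machine; your explicit incremental subset construction is the same thing spelled out.

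Your lower bound takes a genuinely different route. The paper reduces from the acceptance problem for \emph{universal polynomial time} Turing machines: Player~$I$ writes out a sequence of configurations of length~$p(\size{x})$ each, and Player~$O$ may flag an error, with a deterministic acyclic automaton of polynomial size checking the format and the flagged error. Your reduction from DNF tautology is considerably more elementary and is correct for the theorem as stated, but it is worth noting what it buys and what it costs. The gain is simplicity: each disjunct becomes a chain, and the argument is a few lines. The cost is that your automaton~$\aut_\phi$ is inherently non-deterministic (the initial branching over disjuncts), whereas the paper's construction produces a \emph{deterministic} acyclic automaton. The paper explicitly highlights this later (``the lower bounds hold for deterministic ones''), so if you want to match the full strength of the paper's result you would need to determinize---which in general blows up the chain construction exponentially---or switch to a reduction in the spirit of the paper's.
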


The proof of the preceding theorem is split into the following two lemmata. We begin by showing $\conp$-hardness, which already holds for deterministic automata. Then, we show $\conp$-membership, which also holds for non-deterministic automata. 

 The  following hardness proof is reminiscent of the proof of Theorem~\ref{thm_safetyhardness}, but simpler since we only have to deal with universal polynomial time machines instead of alternating polynomial space machines.

\begin{lem}
The following problem is $\conp$-hard: Given an acyclic deterministic automaton~$\aut$ over $\SigmaI \times \SigmaO$, does Player~$O$ win $\delaygame{\lstar{\aut} \cdot (\SigmaI \times \SigmaO)^\omega}$ for some $f$?
\end{lem}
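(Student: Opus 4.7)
The plan is to reduce from acceptance of universal polynomial-time Turing machines, which is $\conp$-complete, using a streamlined variant of the construction from Theorem~\ref{thm_safetyhardness}. Given a universal poly-time TM~$\tm$ with polynomial time and space bound~$p$ and an input~$x$, I construct a deterministic acyclic automaton~$\aut$ over $\SigmaI \times \SigmaO$ of size polynomial in $\size{\tm}+p(\size{x})$ such that $\tm$ accepts $x$ if and only if Player~$O$ wins $\delaygame{\lstar{\aut}\cdot(\SigmaI\times\SigmaO)^\omega}$ for some~$f$.

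Since $\tm$ has only universal states, Player~$I$ alone produces the computation: his letters encode successive configurations of length~$p(\size{x})$ separated by markers, annotated at each separator with the universal transition chosen at the upcoming step. The $\ccopy$-marker from the safety proof is not needed here, because the computation is already polynomially short and Player~$I$ never has to stall. Player~$O$'s letters are $\noerror$ and $\copyerror$, by which she may claim, exactly as in the proof of Theorem~\ref{thm_safetyhardness}, that some particular cell has been updated incorrectly relative to the previous configuration and the last announced transition. The automaton~$\aut$ accepts a finite prefix, thereby making Player~$O$ win via $\lstar{\aut}\cdot(\SigmaI\times\SigmaO)^\omega$, whenever within the fixed polynomial number of rounds required to write $p(\size{x})$ configurations of length $p(\size{x})$ one of the following occurs: Player~$I$ violates the format, Player~$O$ correctly claims an error, or Player~$I$ produces an honest computation reaching~$\qacc$. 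After these polynomially many rounds $\aut$ moves to a non-productive sink, so $\aut$ is acyclic of polynomial depth.

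Correctness proceeds along the same two cases as the safety reduction, but simpler. If $\tm$ accepts~$x$, then every universal branch ends in~$\qacc$, so with a constant lookahead of $p(\size{x})+2$ letters, sufficient to see a full upcoming configuration together with the preceding separator, Player~$O$ can verify each transition before she must commit to $\noerror$ or $\copyerror$. She flags the first genuine cheating step and otherwise plays $\noerror$, ensuring that the play prefix is eventually accepted. Conversely, if $\tm$ rejects~$x$, some universal branch is non-accepting; regardless of the delay function~$f$, Player~$I$ plays that branch faithfully (his strategy depends only on~$x$, not on Player~$O$'s responses), so no error is ever present to be flagged, the accepting state is never reached, and therefore no prefix is accepted. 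The main obstacle, as in Theorem~\ref{thm_safetyhardness}, is the bookkeeping needed to realise the error-checking component — storing the last announced transition together with the previous, current, and next letter around a claimed error, counting off $p(\size{x})+1$ positions to the corresponding cell of the next configuration, and checking compatibility — as a deterministic automaton of polynomial size; its product with the format, halting-state, and length-cap checks is acyclic because the whole verification is hard-capped at polynomial length.
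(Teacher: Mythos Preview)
Your proposal is correct and follows essentially the same approach as the paper: a reduction from acceptance of universal polynomial-time Turing machines, where Player~$I$ writes out a bounded-length sequence of configurations, Player~$O$ uses $\noerror/\copyerror$ to flag a local update error, and the acyclic automaton accepts on a format violation, a correctly flagged error, or arrival at~$\qacc$. The only cosmetic difference is that you have Player~$I$ annotate each separator with the chosen transition (carried over from the alternating construction of Theorem~\ref{thm_safetyhardness}), whereas the paper drops this and simply checks whether the flagged cell witnesses that the next configuration is not \emph{any} successor of the current one; both variants yield a polynomial-size deterministic acyclic automaton and the same correctness argument.
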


\begin{proof}
We show hardness by a reduction from the acceptance problem for universal polynomial time Turing machines. Let $ \tm = (Q, \Sigma, q_I, \Delta, \qacc, \qrej)$  be such a machine, where $\Delta \subseteq Q \times \Sigma \times Q \times \Sigma \times \set{-1, 0, 1}$ is the transition relation. We assume w.l.o.g.\ that the accepting state~$\qacc$ and the rejecting state~$\qrej$ have self-loops. Let $p$ be a polynomial that bounds the time-consumption of $\tm$. Furthermore, let $x \in \Sigma^*$ be an input for $\tm$.

We construct an acyclic deterministic automaton~$\aut$ over $\SigmaI \times \SigmaO$ such that Player~$O$ wins $\delaygame{\lstar{\aut} \cdot (\SigmaI \times \SigmaO)^\omega}$ for some $f$ if and only if $\tm$ accepts $x$, i.e., if and only if every run of $\tm$ on $x$ is accepting. To this end, we construct a game where Player~$I$ produces sequences of configurations to simulate runs of $\tm$ on $x$ and Player~$O$ can claim errors in this simulation. Player~$O$ wins, if she correctly claims an error or if the simulated run is accepting. Thus, we define $\SigmaI = Q \cup \Sigma$ and $\SigmaO= \set{ \noerror, \copyerror }$. Player~$O$ uses $\copyerror$ to claim an error.

Let $t = p(\size{x}) $ and let $\conf$ be the set of encodings of configurations of $\tm$ of length~$t$, i.e., words over the alphabet $\Sigma \cup Q$ of length~$t +1 $ containing exactly one letter from $Q$. Furthermore, let $c_0 \in \conf$ denote the encoding of the initial configuration of $\tm$ on $x$. In the following, we do not distinguish between a configuration and its encoding. 

Now, consider 
the language~$L$ of words~${u \choose v} \in (\SigmaI \times \SigmaO)^*$ such that either 
\[u \in \SigmaI^{\,t \cdot (t+1)} \setminus c_0 \cdot (\conf) ^{t-1} \]
or, if $u$ is in $c_0 \cdot (\conf)^{t-1}$, then $u$ either ends with an accepting configuration or the smallest position marked with $\copyerror$ in $v$ indicates a cell which witnesses that the following configuration in $u$ is not a successor configuration of the current one. It is straightforward to show that $L$ is recognized by an automaton~$\aut$ of polynomial size in $\size{\tm} + t$, which is polynomial in the size of $\tm$ and the length of $x$. To this end, one relies on the fact that $L$ only contains words of a fixed length, which is polynomial in $t$, and on the error checking routine described in the proof of Theorem~\ref{thm_safetyhardness}. Finally, as $L$ is finite, Remark~\ref{rem_acyclic} implies that $\aut$ is acyclic.

It remains to show that Player~$O$ wins $\delaygame{L \cdot (\SigmaI \times \SigmaO)^\omega}$ for some delay function~$f$ if and only if $\tm$ accepts $x$, i.e., every run of $\tm$ on $x$ is accepting.

Thus, let $x$ be accepted by $\tm$, i.e., every run of $\tm$ on $x$ reaches an accepting configuration after executing at most $t-1$ transitions. Hence, in order to win a delay game with winning condition~$L \cdot (\SigmaI \times \SigmaO)^\omega$, Player~$I$ has to introduce an error in the simulation of $\tm$ on $x$. With sufficiently large constant lookahead, Player~$O$ can catch him and correctly claim the error. Hence, she wins $\delaygame{L \cdot (\SigmaI \times \SigmaO)^\omega}$ for some $f$.

Conversely, assume $x$ is not accepted by $\tm$, i.e., there is a run of $\tm$ that does not reach an accepting configuration. Player~$I$ can simulate this run in $\delaygame{L \cdot (\SigmaI \times \SigmaO)^\omega}$ for every~$f$. As he does not introduce an error, he wins the resulting play, i.e., Player~$O$ does not win $\delaygame{L \cdot (\SigmaI \times \SigmaO)^\omega}$ for any $f$.
\end{proof}

Next, we show that the problem is in $\conp$, which completes the proof of Theorem~\ref{thm_reachabilitycapsafetycomplexity}

\begin{lem}
\label{lem_reachabilitycapsafetyconp}
The following problem is in $\conp$: Given an acyclic non-deterministic automaton~$\aut$ over $\SigmaI \times \SigmaO$, does Player~$O$ win $\delaygame{\lstar{\aut} \cdot (\SigmaI \times \SigmaO)^\omega}$ for some $f$?
\end{lem}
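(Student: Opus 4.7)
The plan is to apply Theorem~\ref{theorem_reachcomplex} to reduce the problem to a universality check, and then exploit acyclicity of $\aut$ to show that this check lies in $\conp$.

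Interpreting $\aut$ as a non-deterministic reachability automaton, we have $\lstar{\aut} \cdot (\SigmaI \times \SigmaO)^\omega = \lexists{\aut}$, so Theorem~\ref{theorem_reachcomplex} yields that Player~$O$ wins $\delaygame{\lstar{\aut} \cdot (\SigmaI \times \SigmaO)^\omega}$ for some $f$ if and only if $\proj{\lexists{\aut}}$ is universal. By the remark in Section~\ref{subsec_automata}, $\proj{\lexists{\aut}} = \lexists{\proj{\aut}} = \lstar{\proj{\aut}} \cdot \SigmaI^\omega$, so the game is won by Player~$O$ precisely when every infinite input word has a prefix in $\lstar{\proj{\aut}}$.

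The key observation is that $\proj{\aut}$ inherits acyclicity from $\aut$: projection relabels transitions but leaves the underlying transition graph (and the set of accepting states) unchanged, so both the set of cycles and the set of productive states coincide with those of $\aut$. Hence $\lstar{\proj{\aut}} \subseteq \SigmaI^{\le k}$, where $k \le \size{\aut}-1$ is the depth of $\aut$, and universality of $\proj{\lexists{\aut}}$ reduces to the polynomial-length statement that every $u \in \SigmaI^{k}$ has some prefix in $\lstar{\proj{\aut}}$.

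To conclude, the complementary problem lies in \textsc{NP}: a refuting witness is a single $u \in \SigmaI^{k}$, which has polynomial length, and verifying that none of its at most $k+1$ prefixes is accepted by $\proj{\aut}$ amounts to that many membership checks for a non-deterministic finite automaton on a fixed finite word, each of which can be performed in polynomial time via the standard subset-reachability construction. The main subtlety is simply to observe that projection preserves acyclicity and that the depth bound collapses the generally exponential lookahead bound from Theorem~\ref{theorem_reachcomplex} to a polynomial witness; once these points are in place, $\conp$-membership is immediate.
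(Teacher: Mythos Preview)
Your proof is correct and follows essentially the same approach as the paper: reduce to universality of $\proj{L}$ via Theorem~\ref{theorem_reachcomplex}, use acyclicity to bound the relevant prefix length by the depth~$k \le \size{\aut}-1$, and conclude $\conp$-membership. The paper phrases the last step as ``solvable in polynomial time by a universal Turing machine'' while you spell out the complementary \textsc{NP} witness explicitly, but this is the same argument.
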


\begin{proof}
Let $L = \lstar{\aut} \cdot (\SigmaI \times \SigmaO)^\omega$, which is in particular a reachability condition. Thus, Theorem~\ref{theorem_reachcomplex} shows that Player~$O$ wins $\delaygame{L}$ if and only if $\proj{L}$ is universal. Let $k$ be the depth of $\aut$. As every word of length greater than $k$ leads to a non-accepting sink state of $\aut$, membership in $\lstar{\aut} \cdot (\SigmaI \times \SigmaO)^\omega$ only depends on the prefix of length~$k$. Hence, $\proj{L}$ is universal if and only if  $\proj{\aut}$ accepts at least one prefix of every word of length~$k \le \size{\aut}$. The latter problem can be solved in polynomial time by a universal Turing machine.
\end{proof}

After resolving the complexity of delay games with $\omega$-regular clopen winning conditions, we turn our attention to proving lower and upper bounds on the necessary lookahead in such games. We begin by showing that the depth of the automaton recognizing the winning condition is an upper bound on the necessary lookahead, which follows from the reasoning presented in the proof of Lemma~\ref{lem_reachabilitycapsafetyconp}.

\begin{thm} \label{thm_reachabilitycapsafetybounds}

  Let $\aut$ be an acyclic non-deterministic automaton over $\SigmaI \times \SigmaO$ and let $k$ be the depth of $\aut$. The following are
  equivalent:

  \begin{enumerate}

  \item \label{thm_reachabilitycapsafetybounds_any}
   Player~$ O $ wins $ \delaygame{\lstar{\aut} \cdot (\SigmaI \times \SigmaO)^\omega} $ for some delay function~$
    f $.

  \item\label{thm_reachabilitycapsafetybounds_linear}
   Player~$ O $ wins $ \delaygame{\lstar{\aut} \cdot (\SigmaI \times \SigmaO)^\omega}$ for some constant delay
    function~$ f $ with $ f(0) \leq k $.
  \end{enumerate}
\end{thm}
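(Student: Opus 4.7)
The implication (\ref{thm_reachabilitycapsafetybounds_linear})$\,\Rightarrow\,$(\ref{thm_reachabilitycapsafetybounds_any}) is immediate. For the converse direction, the plan is to combine Theorem~\ref{theorem_reachcomplex} with the observation about depth that already appears in the proof of Lemma~\ref{lem_reachabilitycapsafetyconp}. Set $L := \lstar{\aut}\cdot(\SigmaI\times\SigmaO)^\omega$. Since $L$ is a reachability condition (a corresponding reachability automaton is obtained by turning the accepting states of $\aut$ into sinks and leaving the transition structure otherwise unchanged), Theorem~\ref{theorem_reachcomplex} tells us that Player~$O$ wins $\delaygame{L}$ for some~$f$ if and only if $\proj{L}$ is universal.

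Next, I would use the acyclicity and depth of $\aut$. Invoking the remark following Remark~\ref{rem_acyclic}, we may assume without loss of generality that every word of length greater than~$k$ drives $\aut$ (and hence its projection $\proj{\aut}$) into a non-accepting sink. Consequently, membership in $\lstar{\proj{\aut}}$ depends only on the prefix of length~$k$ of the input, so the universality of $\proj{L} = \lexists{\proj{\aut}} = \lstar{\proj{\aut}}\cdot\SigmaI^\omega$ is equivalent to the statement that every word $u\in\SigmaI^k$ has a prefix in $\lstar{\proj{\aut}}$. This is exactly the reformulation used in the proof of Lemma~\ref{lem_reachabilitycapsafetyconp}.

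Finally, I would exhibit an explicit winning strategy for Player~$O$ in $\delaygame{L}$ for the constant delay function~$f$ with $f(0)=k$. After Player~$I$ produces $u_0 = \widx{\alpha}{0}\cdots\widx{\alpha}{k-1}$ in round~$0$, Player~$O$ selects, via the equivalence established above, some $j<k$ such that $\widx{\alpha}{0}\cdots\widx{\alpha}{j}\in\lstar{\proj{\aut}}$. By definition of the projection, there exists a word $\widx{\beta}{0}\cdots\widx{\beta}{j}\in\SigmaO^{j+1}$ such that ${\widx{\alpha}{0}\choose\widx{\beta}{0}}\cdots{\widx{\alpha}{j}\choose\widx{\beta}{j}}\in\lstar{\aut}$. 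She plays these letters during rounds $0,\ldots,j$ and picks arbitrary letters of $\SigmaO$ from round $j+1$ onwards. The resulting outcome lies in $\lstar{\aut}\cdot(\SigmaI\times\SigmaO)^\omega = L$, so the strategy is winning.

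I do not expect a serious technical obstacle, as the argument is essentially a direct combination of Theorem~\ref{theorem_reachcomplex} and the depth-based reformulation of universality. The only minor subtlety is the degenerate case $k=0$, which forces the initial state of $\aut$ to be accepting (otherwise $\proj{L}$ is not universal) and hence $L=(\SigmaI\times\SigmaO)^\omega$; Player~$O$ then wins trivially with any delay function, so this case is harmless.
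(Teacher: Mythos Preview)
Your proposal is correct and follows essentially the same route as the paper: invoke Theorem~\ref{theorem_reachcomplex} to reduce to universality of $\proj{L}$, use the depth bound (via the remark after Remark~\ref{rem_acyclic}) to conclude that every length-$k$ input word has a prefix in $\lstar{\proj{\aut}}$, and then read off a winning strategy for the constant delay function with $f(0)=k$. The paper's proof is slightly terser---it does not single out the degenerate case~$k=0$ and phrases the completion as a full word of length~$k$ rather than via an index~$j$---but the underlying argument is identical.
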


\begin{proof}
We only consider the non-trivial implication~(\ref{thm_reachabilitycapsafetybounds_any})~$\Rightarrow$~(\ref{thm_reachabilitycapsafetybounds_linear}). Let $L = \lstar{\aut} \cdot (\SigmaI \times \SigmaO)^\omega$.  

Assume Player~$O$ wins $\delaygame{L}$ for some $f$. Then, by Theorem~\ref{theorem_reachcomplex}, $\proj{L}$ is universal, which implies that $\proj{L(\aut)}$ contains at least one prefix of every word of length $k$. Hence, for every possible first move~$\alpha(0) \cdots \alpha(k-1) \in \SigmaI^k$ of Player~$I$ in $\delaygame{L}$, there is a $\beta(0) \cdots \beta(k-1) \in \SigmaO^k$ such that ${\alpha(x) \choose \beta(0)} \cdots {\alpha(k-1) \choose \beta(k-1)}$ has a prefix that is accepted by $\aut$. Thus, every continuation of this prefix is in $L$. We conclude that Player~$O$ has a winning strategy for $\delaygame{L}$ whenever $f(0)$ is greater or equal than the depth of $\aut$. 
\end{proof}

To conclude this section, we present a matching lower bound.

\goodbreak
\begin{thm}
	\label{thm_lowerbounds_reachabilitycapsafety}
  For every $ n \ge 0 $,  there is a language~$ L_{n} $ over $\SigmaI = \SigmaO = \set{a,b}$ such that

  \begin{itemize}

  \item $L_n = \lstar{\aut_n}$  for some deterministic acyclic
    automaton~$\aut_n$ with $\size{\aut_n} \in \bigo(n)$ and depth $n+1$,

  \item Player~$ O $ wins $ \delaygame{L_{n}\cdot (\SigmaI \times \SigmaO)^\omega} $ for some constant
    delay function~$ f $, but

  \item Player~$ I $ wins $ \delaygame{L_{n}\cdot (\SigmaI \times \SigmaO)^\omega} $ for every delay
    function~$ f $ with $ f(0) \leq {n} $.

  \end{itemize}
\end{thm}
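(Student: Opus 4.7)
The plan is to design a clopen condition in which Player~$O$'s very first output letter must match Player~$I$'s letter at position $n$, so that winning requires lookahead at least $n+1$. Concretely, I would take
\[
L_n = \set{w \in (\SigmaI \times \SigmaO)^{n+1} : \widx{\beta}{0} = \widx{\alpha}{n}}
\]
where $\widx{\alpha}{i}$ and $\widx{\beta}{i}$ denote the input- and output-components of the $i$-th letter of $w$. For $n = 0$ the language reduces to $\set{w \in \SigmaI \times \SigmaO : \widx{\alpha}{0} = \widx{\beta}{0}}$, and the third bullet is vacuous since any delay function satisfies $f(0) \ge 1 > n$, so the construction works uniformly.

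Next, I would construct an acyclic deterministic automaton~$\aut_n$ for $L_n$: for $n \ge 1$, a unique initial state branches after the first letter into two states tracking the value of $\widx{\beta}{0}$; these branches are then extended by chains of $n-1$ filler states that pass arbitrary letters along while preserving the tracked value; after the $(n+1)$-st letter is read, the automaton moves to an accepting state if $\widx{\alpha}{n}$ equals the tracked $\widx{\beta}{0}$ and to a non-productive state otherwise; both lead to a single non-accepting sink for completeness. For $n = 0$ the automaton reads one letter and directly compares its two components. In either case $\size{\aut_n} \in \bigo(n)$, the depth is exactly $n+1$, and acyclicity holds since the only cycle is the self-loop on the non-accepting sink, which is non-productive.

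For the upper bound I take the constant delay function with $f(0) = n+1$: after round~$0$, Player~$O$ has seen $\widx{\alpha}{0}, \ldots, \widx{\alpha}{n}$ and plays $\widx{\beta}{0} := \widx{\alpha}{n}$. The resulting length-$(n+1)$ prefix lies in $L_n$, so every continuation lies in $L_n \cdot (\SigmaI \times \SigmaO)^\omega$.

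For the lower bound, fix any $f$ with $f(0) \le n$ (so $n \ge 1$). Player~$I$ picks arbitrary letters for positions $0, \ldots, f(0)-1$ in round~$0$; since $f(0) - 1 < n$, position $n$ has not yet been filled when Player~$O$ must commit to $\widx{\beta}{0}$ at the end of that round. In the later round in which position~$n$ is filled, Player~$I$ picks the unique letter in $\SigmaI \setminus \set{\widx{\beta}{0}}$ there, which exists because $\size{\SigmaI} = 2$. The induced length-$(n+1)$ prefix then violates $\widx{\beta}{0} = \widx{\alpha}{n}$, placing the play outside $L_n \cdot (\SigmaI \times \SigmaO)^\omega$ regardless of continuation. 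The only step requiring some care is this timing argument—ensuring that $f(0) \le n$ really forces Player~$O$ to commit before position~$n$ is filled—but this is immediate from the round structure of delay games.
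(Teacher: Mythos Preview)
Your proposal is correct and takes essentially the same approach as the paper: you define exactly the same language~$L_n$ (the paper writes it as $\{ {\alpha(0) \choose \beta(0)} \cdots {\alpha(n) \choose \beta(n)} \mid \beta(0) = \alpha(n) \}$) and argue the same equivalence that Player~$O$ wins if and only if $f(0) > n$. You simply spell out the automaton construction and the timing argument that the paper dismisses as ``straightforward''.
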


\begin{proof}
Consider the language
\begin{equation*}
L_n = \left\{ 
{\widx{\alpha}{0} \choose \widx{\beta}{0}} \cdots {\widx{\alpha}{n} \choose \widx{\beta}{n}}   ~\Bigg|~ \beta(0) = \alpha(n)
\right\}.
\end{equation*}
It is straightforward to show that $L_n$ is recognizable by a deterministic acyclic automaton of size~$\bigo(n)$ and depth~$n+1$, and that Player~$O$ wins $\delaygame{L_n \cdot (\SigmaI \times \SigmaO)^\omega}$ if and only if $f(0) > n$.
\end{proof}

Note that all upper bounds in this section hold for non-deterministic automata while the lower bounds hold for deterministic ones.

To conclude this section, let us come back to the lower bounds for reachability and safety conditions presented in Section~\ref{sec_lowerbounds}. The deterministic safety automata~$\aut_n'$ witnessing the exponential lower bound can be equipped with a reachability acceptance that witnesses an exponential lower bound for reachability conditions. However, these automata do not accept the same language. If they would, then Lemma~\ref{lemma_reachabilitycapsafety_charac} yields an acyclic automaton of depth $\bigo(n)$ that represents this language. Thus, Theorem~\ref{thm_reachabilitycapsafetybounds} shows that lookahead~$\bigo(n)$ is sufficient for Player~$O$. However, this contradicts the lower bound proven in Theorem~\ref{thm_lowerboundssafe}.


\section{Conclusion}
\label{sec_conclusion}
We gave the first algorithm that solves $\omega$-regular delay games in exponential time, which is an exponential improvement over the previously known algorithms. We complemented this by showing the problem to be $\exptime$-complete, even for safety conditions. Also, we determined the exact amount of lookahead that is necessary to win $\omega$-regular delay games by proving tight exponential bounds, which already hold for safety and reachability conditions. The $\exptime$-completeness of solving delay games with safety conditions is contrasted by $\pspace$-completeness of solving delay games with reachability conditions. Due to this gap in complexity and due to the exponential lower bounds on the lookahead, we also considered delay games with $\omega$-regular clopen winning conditions. Here, linear lookahead suffices and is in general necessary. Furthermore, determining the winner in such games is $ \conp $-complete. All our lower and upper bounds hold for deterministic automata while for reachability conditions (including clopen conditions) our results even hold for non-deterministic automata. To the best of our knowledge, these are the first non-trivial lower bounds on lookahead and complexity for delay games.

Thus, we completed the picture for deterministic automata and provided partial results for non-deterministic automata, e.g., for automata with reachability acceptance condition. One can trivially obtain upper bounds for the other types of non-deterministic (and universal) automata using determinization, but this incurs an exponential blowup. In current research, we show this to be unavoidable. Furthermore, for alternating automata, we have recently proven tight triply-exponential bounds on the necessary lookahead and shown $\threeexp$-completeness of determining the winner of such a game~\cite{KleinZimmermann16}. Here, the lower bounds already hold for LTL specifications.

An open question concerns the influence on the necessary lookahead and the solution complexity when using different deterministic automata models that recognize the class of $\omega$-regular conditions, e.g., Rabin, Streett, and Muller automata. Indeed, our construction used to prove Theorem~\ref{thm_paritycomplexity_ub} can be adapted to deal with these acceptance conditions, e.g., for conditions given by Muller automata, $\autc$ keeps track of the states visited on a run and $\game(\aut)$ is a Muller game. This yields upper bounds, but it is open whether these are optimal. 

Another promising research direction is the study of restricted classes of strategies for delay games, e.g., finite-state strategies for games with $\omega$-regular winning conditions. Our reduction presented in Section~\ref{subsec_paritycomplex} implies the existence of such strategies, which in general need exponentially many memory states as well as access to the complete lookahead. This result has been proved and a more elegant construction based on the players picking blocks of letters (cf.\ the proof of Lemma~\ref{lem_splitgamecorrectness}), was presented by Salzmann~\cite{Salzmann15}. 

\subsection*{Acknowledgments.}  We thank Bernd Finkbeiner for a
fruitful discussion that lead to Theorem~\ref{thm_lowerboundsreach} and Theorem~\ref{theorem_reachcomplex}. Also, we thank the reviewers for their valuable feedback.


\bibliographystyle{plain}
\bibliography{biblio}


\appendix
\section{Universality of Non-deterministic Reachability Automata}
\label{sec_universality}
We show that universality of non-deterministic reachability automata is $\pspace$-complete. The hardness proof is a small extension of the classical $\pspace$-hardness proof for non-deterministic finite automata~\cite{MeyerStockmeyer72}. We give the proof here for the sake of completeness, as we could not find it in the literature (only some mentionings of the result).

\begin{thm}
The universality problem for non-deterministic reachability automata is $\pspace$-complete.
\end{thm}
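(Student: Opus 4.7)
The plan is to prove membership and hardness separately, adapting the classical Meyer--Stockmeyer paradigm for NFA universality to the setting of reachability acceptance on infinite words.

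\emph{$\pspace$-membership.} Exploiting the observation used in the proof of Theorem~\ref{theorem_reachcomplex} that we may assume the accepting states of $\aut$ to be sinks, $\lexists{\aut}$ is universal iff no infinite path in the subset construction of $\aut$ stays forever inside $Q \setminus F$. Since the subset space is finite, failure of universality is equivalent to the existence of a non-empty subset $S \subseteq Q \setminus F$ that is (i)~reachable from $\set{q_I}$ in the subset construction through intermediate subsets contained in $Q \setminus F$, and (ii)~reachable from itself via a non-empty word, again through such subsets. Both are reachability questions in the implicitly represented graph on $2^{\size{\aut}}$ vertices whose edges are computable in polynomial space; by Savitch's theorem each is decidable in $\pspace$. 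The algorithm that guesses $S$ and verifies both conditions decides non-universality in \textsc{NPSpace}, so universality lies in $\pspace$.

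\emph{$\pspace$-hardness.} I reduce from the acceptance problem for deterministic polynomial-space Turing machines, which is $\pspace$-complete. Given such a machine $\tm$ with halting states~$\qacc$ and $\qrej$ (both equipped with self-loops) and an input~$x$, let $p$ bound $\tm$'s space consumption and let $\conf$ denote the set of configuration encodings of length~$p(\size{x})$. I construct, in polynomial time, a non-deterministic reachability automaton $\aut$ over an alphabet containing the tape symbols, the states of $\tm$, and a separator $\sharpsym$; its intended inputs are words of the form $c_{0}\, \sharpsym\, c_{1}\, \sharpsym\, c_{2}\, \sharpsym\, \cdots$ with each $c_i \in \conf$. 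The automaton $\aut$ is obtained by an initial nondeterministic branching into four polynomial-size sub-automata, each detecting one specific flaw on some finite prefix and then moving to an accepting sink: (a)~the prefix is not a proper block encoding, (b)~the first block differs from the initial configuration of $\tm$ on $x$, (c)~some position carries a letter inconsistent, via $\tm$'s transition rule, with the corresponding three letters of the preceding block (the faulty position is guessed nondeterministically, as in the proof of Theorem~\ref{thm_safetyhardness}), or~(d)~the letter $\qacc$ occurs somewhere in the prefix. Each of (a)--(d) is recognised by a polynomial-size NFA using the standard finite-window technique.

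\emph{Correctness and main obstacle.} If $\tm$ accepts $x$, then by determinism the only run of $\tm$ on $x$ is $c_0 \to c_1 \to \cdots \to \qacc \to \qacc \to \cdots$; any infinite word avoiding flaws~(a)--(c) must encode this run and therefore eventually exhibits $\qacc$, triggering~(d). Hence every infinite word has a flaw and $\lexists{\aut}$ is universal. Conversely, if $\tm$ rejects $x$, then the infinite encoding of the unique rejecting run (self-looping on $\qrej$) displays none of the four flaws and witnesses non-universality. The main obstacle is precisely that reachability acceptance can only certify a fault at a finite prefix, so ``$\tm$ does not accept $x$'' must be reformulated as a finitely observable event; the combination of flaw~(d) with the determinism of $\tm$ is exactly what makes this reformulation correct.
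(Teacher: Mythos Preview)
Your proof is correct. The membership argument is essentially the same as the paper's (the paper converts to a B\"uchi automaton and cites the known $\pspace$ bound for B\"uchi universality, while you unfold the subset-construction argument directly; these amount to the same thing).

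The hardness argument, however, takes a genuinely different and in fact simpler route. The paper encodes configurations together with a binary time-stamp counter~$\bin{n}$ and defines three flaw languages~$L_1,L_2,L_3$; the crucial one is~$L_3$, which checks that at the final time step~$\bin{t-1}$ the configuration is \emph{not} accepting. Thus in the paper, $\tm$ accepts~$x$ iff $\aut$ is \emph{not} universal, and closure of $\pspace$ under complement is invoked. You dispense with the counter entirely: your flaw~(d) simply looks for an occurrence of~$\qacc$, and your reduction goes the other way ($\tm$ accepts iff $\aut$ \emph{is} universal). What makes your simplification work is the determinism of~$\tm$: a word avoiding flaws~(a)--(c) is forced to encode the \emph{unique} computation, so if that computation accepts, flaw~(d) is unavoidable. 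The paper's counter-based approach, by contrast, would still go through for nondeterministic space-bounded machines (one would check ``not a successor'' in~$L_2$), whereas your argument would break there, since a rejecting branch could be encoded flaw-free even when some accepting branch exists. For the purpose at hand---$\pspace$-hardness via deterministic machines---your version is cleaner; the paper's version is the more robust template.
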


\begin{proof}
Membership in $\pspace$ is straightforward: every reachability automaton can be turned into a Büchi automaton turning the accepting states into sinks such that the resulting automaton recognizes the same language\footnote{The correctness relies on the reachability automaton being complete, which we require of our automata.}. Universality for Büchi automata is known to be in $\pspace$~\cite{SistlaVardiWolper85}.

We show hardness by a reduction from the acceptance problem for polynomial space Turing machines. Let $ \tm = (Q, \Sigma, q_I, \delta, \qacc, \qrej)$  be such a machine, where $\delta \colon Q \times \Sigma \rightarrow Q \times \Sigma \times \set{ -1, 0, 1}$ is the transition function. We assume w.l.o.g.\ that the accepting state~$\qacc$ and the rejecting state~$\qrej$ have self-loops. Furthermore, let $x \in \Sigma^*$ be an input for $\tm$.

Let $p$ be a polynomial that bounds the space-consumption of $\tm$. From $p$ we can compute a polynomial~$p'$ such that $2^{p'}$ bounds the time-consumption of $\tm$. Thus, $s = p(|x|)$ and $t = 2^{p'(|x|)}$ are upper bounds on the space- and time-consumption of $\tm$ on $x$, respectively.

We construct a reachability automaton~$\aut$ such that $x$ is accepted by $\tm$ if and only if $\aut$ is not universal. This suffices to prove our claim, since $\pspace$ is closed under complement.

A configuration of $\tm$'s run on $x$ is encoded by a word~$c \in (Q \cup \Sigma)^{s+1}$ as usual. In the following, we do not distinguish between a configuration and its encoding. In the construction, we need to count the configurations of the run. To this end, let $\bin{n} \in \set{0,1}^*$ denote the binary encoding of $n$ in the range~$0 \le n \le t-1$ using $\log_2(t)$ many bits. Now, consider the following three $\omega$-languages over the alphabet $\Sigma' = Q \cup \Sigma \cup \set {0,1,\dollar, \sharpsym}$, where we assume $(Q \cup \Sigma) \cap \set {0,1,\dollar, \sharpsym} = \emptyset$.

\begin{enumerate}
	\item $L_1$ is the set of $\omega$-words that do \emph{not} start with $\bin{0} \dollar\, c_0\, \sharpsym\, $, where $c_0$ is the initial configuration of $\tm$ on $x$.
	
	\item $L_2$ is the set of $\omega$-words having an infix~$\bin{n}\,\dollar\, c\, \sharpsym$ such that
	\begin{itemize}
		\item $n < t-1$,
		\item the infix is \emph{not} followed by $\bin{n+1} \, \dollar \, c' \, \sharpsym$, where $c'$ is the successor configuration of $c$.
	\end{itemize}
	
	\item $L_3$ is the set of $\omega$-words containing the infix~$\bin{t-1}$ such that the first occurrence of the infix~$\bin{t-1}$ is \emph{not} followed by $\dollar \, c \, \sharpsym$ for some accepting configuration~$c$ of length~$s$.
\end{enumerate}

We claim that all three languages can be accepted by non-deterministic reachability automata of polynomial size in $\size{\tm} + |x|$. This is straightforward for $L_1$ and $L_3$ and for $L_2$ we use the fact that it suffices to find a single bit or tape-cell that is not updated correctly. Let $\aut$ be an automaton such that $\lexists{\aut} = L_1 \cup L_2 \cup L_3$, again of polynomial size in $\size{\tm} + |x|$. Such an automaton exists, since the union of  reachability languages can be recognized by the disjoint union of the corresponding automata with a fresh initial state. We claim that $\aut$ has the desired properties.

Assume $\tm$ accepts $x$ and let $c_0, c_1, \ldots, c_k$ for $k \le t-1$ be the accepting run. Consider the word
\[ 
\bin{0} \, \dollar \, c_0 \, \sharpsym \,
\bin{1} \, \dollar \, c_1 \, \sharpsym \,
 \cdots
 \bin{k} \, \dollar \, c_k \, \sharpsym \,
\bin{k+1} \, \dollar \, c_k \, \sharpsym \,
 \cdots
 \bin{t-1} \, \dollar \, c_k \, \sharpsym \,
 0^\omega,
\]
which is not in $\lexists{\aut}$. Hence, $\aut$ is not universal.

Now, assume $\aut$ is not universal. Then, there is a word~$w$ that is not in $\lexists{\aut} = L_1 \cup L_2 \cup L_3$. Thus, $w$ starts with $\bin{0} \dollar\, c_0\, \sharpsym\, $, where $c_0$ is the initial configuration of $\tm$ on $x$. This is followed by $\bin{1} \dollar\, c_1\, \sharpsym\, $, where $c_1$ is the successor configuration of $c_0$. This is continued until we reach an infix~$\bin{t-1} \dollar\, c_{t-1}\, \sharpsym\, $, which is the first occurrence of the infix~$\bin{t-1}$. Due to $w$ not being in $L_3$, we conclude that $c_{t-1}$ is an accepting configuration. Therefore, $\tm$ accepts $x$, as we have constructed an accepting run.
\end{proof}


\end{document}